\let\emptyset\varnothing
\def\break{\penalty-1000}
\newcounter{compressEnum}
\renewcommand{\thecompressEnum}{$\roman{compressEnum}$}
\newenvironment{compressEnum}
{\setcounter{compressEnum}{0}}
{}
\newcommand{\itCompress}{\stepcounter{compressEnum}{(\thecompressEnum) }}
\newtheorem{conj}{Conjecture} 
\newtheorem{theo}{Theorem} 
\newtheorem{lemm}{Lemma}
\newtheorem{coro}{Corollary}
\newtheorem{defi}{Definition}
\let\c@lemm\c@theo
\let\c@coro\c@theo
\let\c@defi\c@theo
\let\c@assu\c@theo
\def\abs#1{\ensuremath{\lvert #1\rvert}}
\def\bigabs#1{\ensuremath{\big\lvert #1 \big\rvert}}
\def\norm#1{\ensuremath{\lVert #1\rVert}}
\newcommand{\rat}{{\mathbb Q}}
\newcommand{\ratl}{{\mathbb L}}
\newcommand{\nat}{\mathbb N}
\newcommand{\real}{{\mathbb R}}
\newcommand{\C}{\mathcal{C}}
\newcommand{\Z}{\mathcal{Z}}
\newcommand{\Val}{\mathsf{Val}}
\newcommand{\LimAvg}{\mathsf{LimAvg}}
\newcommand{\LimInfAvg}{\mathsf{LimInfAvg}}
\newcommand{\LimSupAvg}{\mathsf{LimSupAvg}}
\newcommand{\Avg}{\mathsf{Avg}}
\newcommand{\Sum}{\mathrm{sum}}
\newcommand{\weight}{\mathsf{wt}}
\newcommand{\conv}{\mathsf{conv}}
\def\abs#1{\ensuremath{\lvert #1\rvert}}
\newcommand{\tuple}[1]{\langle #1 \rangle}
\newcommand{\set}[1]{\{#1\}}
\newcommand{\capsp}{{\; \cap \;}}
\newcommand{\constr}{\gamma}
\newcommand{\convk}{\mathsf{conv}}
\newcommand{\ok}{\checkmark}
\newcommand{\ko}{$\times$}
\title{{\bf Mean-Payoff Automaton Expressions}} 
\author{Krishnendu Chatterjee\inst{1} \and Laurent Doyen\inst{2} \and Herbert Edelsbrunner\inst{1} \and \\ Thomas A. Henzinger\inst{1} \and Philippe Rannou\inst{3}
}
\institute{IST Austria (Institute of Science and Technology Austria) \and
LSV, ENS Cachan \& CNRS, France \and
ENS Cachan Bretagne, Rennes, France}
\begin{document}
\maketitle

\begin{abstract}
Quantitative languages are an extension of boolean languages that assign to 
each word a real number. 
Mean-payoff automata are finite automata with numerical weights on transitions
that assign to each infinite path 
the long-run average of the transition weights.
When the mode of branching of the automaton is deterministic, nondeterministic,
or alternating, the corresponding class of quantitative languages is not 
\emph{robust} as it is not closed under the pointwise operations of 
max, min, sum, and numerical complement.
Nondeterministic and alternating mean-payoff automata are not \emph{decidable}
either, as the quantitative generalization of 
the problems of universality and language inclusion is undecidable.

We introduce a new class of quantitative languages, defined by 
\emph{mean-payoff automaton expressions}, which is robust and decidable: 
it is closed under the four pointwise operations, and we show that all 
decision problems are decidable for this class.
Mean-payoff automaton expressions subsume deterministic mean-payoff automata,
and we show that they have expressive power incomparable to 
nondeterministic and alternating mean-payoff automata.
We also present for the first time an algorithm to compute distance between two 
quantitative languages, and in our case the quantitative languages are given
as mean-payoff automaton expressions.
\end{abstract}

\section{Introduction}\label{sec:intro}

Quantitative languages $L$ are a natural generalization of boolean languages
that assign to every word $w$ a real number $L(w) \in \real$ instead of a 
boolean value. 
For instance, the value of a word (or behavior) can be interpreted as the 
amount of some resource (e.g., memory consumption, or power consumption) needed
to produce it, or bound the long-run average available use of the resource. 
Thus quantitative languages can specify properties related to 
resource-constrained programs, and an implementation $L_A$ satisfies 
(or refines) a specification $L_B$ if $L_{A}(w)\leq L_{B}(w)$ for 
all words~$w$. 
This notion of refinement is a \emph{quantitative generalization of language 
inclusion}, and it can be used to check for example if for each behavior, 
the long-run average response time 
of the system lies below the specified average response requirement.
Hence it is crucial to identify some relevant class of 
quantitative languages for which this question is decidable.
The other classical decision questions such as emptiness, universality, and 
language equivalence have also a natural quantitative extension.
For example, the \emph{quantitative emptiness problem} asks,
given a quantitative language~$L$ and a threshold $\nu \in \rat$, whether there 
exists some word $w$ such that $L(w) \geq \nu$, and the 
\emph{quantitative universality problem} asks whether $L(w) \geq \nu$ for all 
words $w$. 
Note that universality is a special case of language inclusion 
(where $L_A(w) = \nu$ is constant).

Weighted \emph{mean-payoff automata} present a nice framework to express
such quantitative properties~\cite{CDH08}.
A weighted mean-payoff automaton is a finite automaton with numerical 
weights on transitions. The value of a word $w$ is the maximal value of all 
runs over~$w$ (if the automaton is nondeterministic, then there may be many 
runs over~$w$), and the value of a run $r$ is the long-run average of the
weights that appear along~$r$. A mean-payoff extension to alternating automata 
has been studied in~\cite{CDH-FCT09}.
Deterministic, nondeterministic and alternating mean-payoff automata are three 
classes of mean-payoff automata with increasing expressive power.
However, none of these classes is closed under the four pointwise operations
of max, min (which generalize union and intersection respectively), numerical 
complement\footnote{The numerical complement of a quantitative languages 
$L$ is $-L$.}, 
and sum (see Table~\ref{tab:properties}). 
Deterministic mean-payoff automata are not closed under max, min, and 
sum~\cite{CDH09b}; nondeterministic mean-payoff automata are not closed under 
min, sum and complement~\cite{CDH09b}; and alternating mean-payoff automata are 
not closed under sum~\cite{CDH-FCT09}.
Hence none of the above classes is \emph{robust} with respect to 
closure properties.

Moreover, while deterministic mean-payoff automata enjoy decidability of all
quantitative decision problems~\cite{CDH08}, 
the quantitative language-inclusion problem is undecidable
for nondeterministic and alternating mean-payoff automata~\cite{DDGRT10}, 
and thus also all decision problems are undecidable for alternating mean-payoff automata.
Hence although mean-payoff automata provide a nice framework to express 
quantitative properties, there is no known class which is both robust and 
decidable (see Table~\ref{tab:properties}).

In this paper, we introduce a new class of quantitative languages that are 
defined by \emph{mean-payoff automaton expressions}. 
An expression
is either a deterministic mean-payoff automaton, or it is the max, min, or sum
of two mean-payoff automaton expressions. 
Since deterministic mean-payoff automata are closed under complement, 
mean-payoff automaton expressions form a robust class that is closed under 
max, min, sum and complement.
We show that 
(a) all decision problems (quantitative emptiness, universality, inclusion, and
equivalence) are decidable for mean-payoff automaton expressions; 
(b) mean-payoff automaton expressions are incomparable in 
expressive power with both the nondeterministic and alternating mean-payoff 
automata (i.e., there are quantitative languages expressible by mean-payoff 
automaton expressions that are not expressible by alternating mean-payoff 
automata, and there are quantitative languages expressible by nondeterministic 
mean-payoff automata that are not expressible by mean-payoff automata 
expressions); and (c) the properties of cut-point languages (i.e., the sets of 
words with value above a certain threshold) for deterministic automata 
carry over to mean-payoff automaton expressions, mainly the cut-point language
is $\omega$-regular when the threshold is isolated (i.e., some neeighborhood
around the threshold contains no word).
Moreover, mean-payoff automaton expressions can express all examples in the
literature of quantitative properties using mean-payoff 
measure~\cite{AlurDMW09,CDH09b,CGHIKPS08}.
Along with the quantitative generalization of the classical decision problems,
we also consider the notion of \emph{distance} between two quantitative languages
$L_A$ and $L_B$, defined as $\sup_w \abs{L_A(w) - L_B(w)}$. 
When quantitative language inclusion does not hold between an implementation $L_A$
and a specification $L_B$, the distance is a relevant information 
to evaluate how close they are, as we may accept implementations
that overspend the resource but we would prefer the least expensive ones.
We present the first algorithm to compute the distance between two 
quantitative languages: we show that the distance can be computed for 
mean-payoff automaton expressions. 

\begin{table}[!tb]
\begin{center}
\begin{tabular}{|l|*{3}{c|}{l|}*{4}{c|}}
\hline
       & \multicolumn{4}{c}{Closure properties} & \multicolumn{4}{|c|}{Decision problems} \\
\cline{2-9}
       & \,$\max$\, & \,$\min$\, & \,$\Sum$\, & \,comp.\,     & \,empt.\, & \,univ.\, & \,incl.\, & \,equiv.\, \\
\hline
\,Deterministic     &  \ko & \ko  & \ko   & \quad \ok\footnotemark\addtocounter{footnote}{-1}    & \ok & \ok & \ok & \ok \\
\hline
\,Nondeterministic\,  &  \ok & \ko  & \ko   & \quad \ko      & \ok & \ko & \ko & \ko \\
\hline
\,Alternating       &  \ok & \ok  & \ko   & \quad \ok\footnotemark       & \ko & \ko & \ko & \ko \\
\hline
\,Expressions       &  \ok & \ok  & \ok   & \quad \ok      & \ok & \ok & \ok & \ok \\
\hline
\end{tabular}
\end{center}
\caption{Closure properties and decidability of the various classes of mean-payoff automata. 
Mean-payoff automaton expressions enjoy fully positive closure and decidability properties.
\label{tab:properties}}
\end{table}

\footnotetext{Closure under complementation holds
because $\LimInfAvg$-automata and $\LimSupAvg$-automata are dual. It would not hold  
if only $\LimInfAvg$-automata (or only $\LimSupAvg$-automata) were allowed.}

Our approach to show decidability of mean-payoff automaton expressions
relies on the characterization and algorithmic computation of the 
values set $\set{L_E(w) \mid w \in \Sigma^\omega}$ of 
an expression $E$, i.e. the set of all values of words according to $E$.
The value set can be viewed as an abstract representation of the 
quantitative language $L_E$, and we show that all decision problems, cut-point 
language and distance computation can be solved efficiently once we have this 
set.

First, we present a precise characterization of the value set for 
quantitative languages defined by mean-payoff automaton expressions. 
In particular, we show that it is not sufficient to construct the convex 
hull $\conv(S_E)$ of the set of 
the values of simple cycles in the mean-payoff automata occurring in $E$,
but we need essentially to apply an operator $F_{\min}(\cdot)$ which given a 
set $Z \subseteq \real^n$ computes the set of points $y \in \real^n$ that can 
be obtained by taking pointwise minimum of each coordinate of points of a set 
$X \subseteq Z$.
We show that while we need to compute the set $V_E = F_{\min}(\conv(S_E))$ to 
obtain the value set, 
and while this set is always convex, it is not always the case that 
$F_{\min}(\conv(S_E)) = \conv(F_{\min}(S_E))$ (which would immediately give
an algorithm to compute $V_E$). This may appear counter-intuitive because
the equality holds in $\real^2$ but we show that the equality does not hold 
in $\real^3$ (Example~\ref{examp2}). 

Second, we provide algorithmic solutions to compute 
$F_{\min}(\conv(S))$, for a finite set $S$.
We first present a constructive procedure that given $S$ constructs a finite 
set of points $S'$ such that $\conv(S')=F_{\min}(\conv(S))$. 
The explicit construction presents interesting properties about the set 
$F_{\min}(\conv(S))$, however the procedure itself is computationally 
expensive.
We then present an elegant and geometric construction of $F_{\min}(\conv(S))$
as a set of linear constraints.
The computation of $F_{\min}(\conv(S))$ is a new problem in computational 
geometry and the solutions we present could be of independent interest.
Using the algorithm to compute $F_{\min}(\conv(S))$, we show that all 
decision problems for mean-payoff automaton expressions are decidable.
Due to lack of space, most proofs are given in the appendix.
 
\smallskip\noindent{\em  Related works.} 
Quantitative languages have been first studied over finite words in the 
context of probabilistic automata~\cite{Rabin63}    
and weighted automata~\cite{Wautomata}.    
Several works have generalized
the theory of weighted automata to infinite words (see~\cite{DrosteK03,DrosteGastin07,LatticeAutomata07,Bojanczyk10}
and~\cite{HandbookWA} for a survey), but none of those have considered mean-payoff conditions.
Examples where the mean-payoff measure has been used to specify long-run behaviours of systems 
can be found in game theory~\cite{EM79,ZwickP96} and in Markov decision processes~\cite{Alfaro98}.
The mean-payoff automata as a specification language have been first investigated in~\cite{CDH08,CDH09b,CDH-FCT09}, 
and extended in~\cite{AlurDMW09} to construct a new class of (non-quantitative) languages of infinite words 
(the multi-threshold mean-payoff languages), obtained by applying a query to a 
mean-payoff language, and for which emptiness is decidable. 
It turns out that a richer language of queries can be expressed 
using mean-payoff automaton expressions (together with decidability of the
emptiness problem). A detailed comparison with the results of~\cite{AlurDMW09}
is given in Section~\ref{sec:decidable}. Moreover, we provide algorithmic solutions
to the quantitative language inclusion and equivalence problems and to distance computation which have no
counterpart for non-quantitative languages. Related notions of metrics have been
addressed in stochastic games~\cite{AlfaroMRS07} and probabilistic processes~\cite{DesharnaisGJP99,VidalTHCC05}.


\section{Mean-Payoff Automaton Expressions}\label{sec:mpa-expressions}

\noindent{\bf Quantitative languages.}
A \emph{quantitative language} $L$ over a finite alphabet $\Sigma$ is a function 
$L: \Sigma^{\omega} \to \real$.
Given two quantitative languages $L_1$ and $L_2$ over $\Sigma$, we denote by $\max(L_1,L_2)$ 
(resp., $\min(L_1,L_2)$, $\Sum(L_1,L_2)$ and $-L_1$)
the quantitative language that assigns $\max(L_1(w),L_2(w))$
(resp., $\min(L_1(w), L_2(w))$, $L_1(w) + L_2(w)$, and $-L_1(w)$) to each word 
$w \in \Sigma^{\omega}$. 
The quantitative language $-L$ is called the \emph{complement} of $L$.
The $\max$ and $\min$ operators for quantitative languages
correspond respectively to the least upper bound and 
greatest lower bound for the pointwise order $\preceq$
such that $L_1 \preceq L_2$ if  $L_1(w) \leq L_2(w)$ for all $w \in \Sigma^{\omega}$.
Thus, they generalize respectively the union and intersection operators 
for classical boolean languages. 

\smallskip\noindent{\bf Weighted automata.}
A \emph{$\rat$-weighted automaton} is a tuple $A=\tuple{Q,q_I,\Sigma,\delta,\weight}$, where
\begin{itemize}
\item $Q$ is a finite set of states, $q_I \in Q$ is the initial state, and $\Sigma$ is a finite alphabet;
\item $\delta \subseteq Q \times \Sigma \times Q$ is a finite set of labelled transitions.
We assume that $\delta$ is \emph{total}, i.e., for all 
$q \in Q$ and $\sigma \in \Sigma$, there exists $q'$ such that 
$(q,\sigma,q') \in \delta$; 
\item $\weight: \delta \to \rat$ is a \emph{weight} function, where $\rat$ is the 
set of rational numbers. We assume that rational numbers are encoded as pairs of 
integers in binary.
\end{itemize}
We say that $A$ is \emph{deterministic} if for all 
$q \in Q$ and $\sigma \in \Sigma$, there exists $(q,\sigma,q') \in \delta$ 
for exactly one $q' \in Q$. We sometimes call automata \emph{nondeterministic}
to emphasize that they are not necessarily deterministic. 


\smallskip\noindent{\bf Words and runs.} A \emph{word} $w \in \Sigma^\omega$ is an 
infinite sequence of letters from $\Sigma$.
A \emph{lasso-word} $w$ in $\Sigma^{\omega}$ is an ultimately periodic word of the form 
$w_1 \cdot w_2^{\omega}$ where $w_1 \in \Sigma^*$ is a finite prefix, and $w_2 \in \Sigma^+$ is nonempty.
A \emph{run} of $A$ over an infinite word $w=\sigma_1 \sigma_2 \dots$ 
is an infinite sequence $r = q_0 \sigma_1 q_1 \sigma_2 \dots $ 
of states and letters such that 
\begin{compressEnum}
\itCompress $q_0 = q_I$, and
\itCompress $(q_i,\sigma_{i+1},q_{i+1}) \in \delta$ for all $i \geq 0$.
\end{compressEnum}
We denote by $\weight(r) = v_0 v_1 \dots$ the sequence of weights that occur in~$r$ 
where $v_i = \weight(q_i,\sigma_{i+1},q_{i+1})$ for all $i \geq 0$.

\smallskip\noindent{\bf Quantitative language of mean-payoff automata.}    
The \emph{mean-payoff value} (or limit-average) of a sequence $\bar{v} = v_0 v_1 \dots$ of real numbers
is either 
$$\LimInfAvg(\bar{v}) = \liminf\limits_{n \to \infty} \frac{1}{n} \cdot\sum_{i=0}^{n-1} v_i,
\text{ or}\quad \LimSupAvg(\bar{v}) = \limsup\limits_{n \to \infty} \frac{1}{n}\cdot \sum_{i=0}^{n-1} v_i.$$
Note that if we delete or insert finitely many values in an infinite sequence of numbers,
its limit-average does not change, and if the sequence is ultimately periodic, then
the $\LimInfAvg$ and $\LimSupAvg$ values coincide (and correspond to the mean of the weights
on the periodic part of the sequence).

For $\Val \in \{\LimInfAvg, \LimSupAvg\}$, the quantitative language $L_A$ of $A$ is defined by 
$L_A(w) = \sup \{\Val(\weight(r)) \mid r \text{ is a run of } A \text{ over } w\}$
for all $w \in \Sigma^{\omega}$. 
Accordingly, the automaton $A$ and its quantitative language $L_A$
are called $\LimInfAvg$ or $\LimSupAvg$. 
Note that for deterministic automata, we have $L_A(w) = \Val(\weight(r))$ where 
$r$ is the unique run of $A$ over $w$.

We omit the weight function $\weight$ when it is
clear from the context, and we write $\LimAvg$ when the value according to $\LimInfAvg$ and $\LimSupAvg$
coincide (e.g., for runs with a lasso shape). 

\smallskip\noindent{\bf Decision problems and distance.}
We consider the following classical decision problems for quantitative languages,
assuming an effective presentation of quantitative languages (such as mean-payoff automata,
or automaton expressions defined later).
Given a quantitative language $L$ and a threshold $\nu \in \rat$, 
the \emph{quantitative emptiness problem} 
asks whether there exists a word $w \in \Sigma^{\omega}$
such that $L(w) \geq \nu$, and the \emph{quantitative universality problem}
asks whether $L(w) \geq \nu$ for all words $w \in \Sigma^{\omega}$.

Given two quantitative languages $L_1$ and $L_2$, 
the \emph{quantitative language-inclusion problem} asks whether $L_{1}(w) \leq L_{2}(w)$
for all words $w \in \Sigma^{\omega}$,
and the \emph{quantitative language-equivalence problem} asks whether $L_{1}(w) = L_{2}(w)$
for all words $w \in \Sigma^{\omega}$. Note that universality is a special
case of language inclusion where $L_1$ is constant. Finally, the \emph{distance}
between $L_1$ and $L_2$ is $D_{\sup}(L_1,L_2) = \sup_{w \in \Sigma^{\omega}} \abs{L_1(w) - L_2(w)}$.
It measures how close is an implementation $L_1$ as compared to a specification $L_2$.

It is known that quantitative emptiness is decidable for nondeterministic mean-payoff
automata~\cite{CDH08}, while decidability was open for alternating mean-payoff
automata, as well as for the quantitative language-inclusion problem of
nondeterministic mean-payoff automata. Recent undecidability results on
games with imperfect information and mean-payoff objective~\cite{DDGRT10} entail that 
these problems are undecidable (see Theorem~\ref{thrm_undecidable}). 

\smallskip\noindent{\bf Robust quantitative languages.} 
A class ${\cal Q}$ of quantitative languages is \emph{robust} if the class is 
closed under $\max,\min,\Sum$ and complementation operations. The closure 
properties allow quantitative languages from a robust class to be described 
compositionally.
While nondeterministic $\LimInfAvg$- and $\LimSupAvg$-automata are closed
under the $\max$ operation, they are not closed under $\min$ and complement~\cite{CDH09b}.
Alternating $\LimInfAvg$- and $\LimSupAvg$-automata\footnote{See~\cite{CDH-FCT09} for 
the definition of alternating $\LimInfAvg$- and $\LimSupAvg$-automata that generalize
nondeterministic automata.} 
are closed under $\max$ and $\min$, but are not 
closed under complementation and $\Sum$~\cite{CDH-FCT09} 
We define a \emph{robust} class of quantitative languages for mean-payoff automata which is closed 
under $\max$, $\min$, $\Sum$, and complement,
and which can express all natural examples of quantitative languages defined using the mean-payoff 
measure~\cite{AlurDMW09,CDH09b,CGHIKPS08}.

\smallskip\noindent{\bf Mean-payoff automaton expressions.}
A \emph{mean-payoff automaton expression} $E$ is obtained by the following grammar rule:
$$ E ::=  A \mid \max(E,E) \mid \min(E,E) \mid \Sum(E, E)$$
where $A$ is a \emph{deterministic} $\LimInfAvg$- or $\LimSupAvg$-automaton.
The quantitative language $L_E$ of a mean-payoff automaton expression $E$ 
is~$L_E = L_A$ if $E = A$ is a deterministic automaton,
and $L_E = {\rm op}( L_{E_1}, L_{E_2} )$ if  $E = {\rm op}(E_1,E_2)$
for ${\rm op} \in \{\max,\min,\Sum\}$.
By definition, the class of mean-payoff automaton expression is 
closed under $\max$, $\min$ and $\Sum$.
Closure under complement follows from the fact that the complement 
of $\max(E_1,E_2)$ is $\min(-E_1,-E_2)$, the complement of 
$\min(E_1,E_2)$ is $\max(-E_1,-E_2)$, the complement of $\Sum(E_1,E_2)$ 
is $\Sum(-E_1,-E_2)$, and the complement of a deterministic
$\LimInfAvg$-automaton can be defined by the same automaton with 
opposite weights and interpreted as a $\LimSupAvg$-automaton, and 
vice versa, 
since $- \limsup(v_0, v_1, \dots) = \liminf(-v_0, \break -v_1, \dots)$.
Note that arbitrary linear combinations of deterministic mean-payoff automaton expressions
(expressions such as $c_1 E_1 + c_2 E_2$ where $c_1,c_2 \in \rat$ 
are rational constants) can be obtained for free since scaling the weights of a 
mean-payoff automaton by a positive factor $\abs{c}$ results in a 
quantitative language scaled by the same factor.

\section{The Vector Set of Mean-Payoff Automaton Expressions}\label{sec:vector-set}

Given a mean-payoff automaton expression $E$, let $A_1, \dots, A_n$ be the deterministic weighted 
automata occurring in $E$.
The \emph{vector set} of $E$ is the set 
$V_E = \{\tuple{L_{A_1}(w),\ldots,L_{A_n}(w)} \in \real^n \mid w \in \Sigma^{\omega}\}$
of tuples of values of words according to each automaton $A_i$.
In this section, we characterize the vector set of mean-payoff automaton expressions,
and in Section~\ref{sec:alg-cons} we give an algorithmic procedure to compute this set.
This will be useful to establish the decidability of all decision problems, and
to compute the distance between mean-payoff automaton expressions.
Given a vector $v \in \real^n$, we denote by $\norm{v} = \max_i \,\abs{v_i}$ the \emph{$\infty$-norm} of $v$.

The \emph{synchronized product} of $A_1, \dots, A_n$ such that $A_i = \tuple{Q_i,q^i_I,\Sigma,\delta_i,\weight_i}$
is the $\rat^n$-weighted automaton $A_E = A_1 \times \dots \times A_n = \tuple{Q_1 \times \dots \times Q_n,
(q^1_I, \dots, q^n_I), \Sigma,\delta,\weight}$ such that 
$t = ((q_1, \dots, q_n), \sigma, (q'_1, \dots, q'_n)) \in \delta$ if $t_i:=(q_i, \sigma, q'_i) \in \delta_i$
for all \mbox{$1 \leq i \leq n$}, and $\weight(t) = (\weight_1(t_1), \dots, \weight_n(t_n))$.
In the sequel, we assume that all $A_i$'s are deterministic $\LimInfAvg$-automata (hence, $A_E$ is deterministic) 
and that the underlying graph of the automaton $A_E$ 
has only one strongly connected component (scc). 
We show later how to obtain the vector set without these restrictions.

For each (simple) cycle $\rho$ in $A_E$, let the \emph{vector value} of $\rho$ be the mean of the 
tuples labelling the edges of $\rho$, denoted $\Avg(\rho)$. To each simple cycle $\rho$ in $A_E$ corresponds a (not necessarily simple)
cycle in each $A_i$, and the vector value $(v_1,\dots, v_n)$ of $\rho$ contains the mean value $v_i$ of $\rho$
in each $A_i$. We denote by $S_E$ the (finite) set of vector values of simple cycles in $A_E$.
Let $\conv(S_E)$ be the convex hull of $S_E$.

\begin{lemma}\label{lem:convex-hull-lasso-words}
Let $E$ be a mean-payoff automaton expression. The set
$\conv(S_E)$ is the closure of the set $\{L_E(w) \mid w \text{ is a lasso-word} \}$. 
\end{lemma}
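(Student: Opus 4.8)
The statement asserts that $\conv(S_E)$ equals the topological closure of the set of vector values $\{(L_{A_1}(w),\dots,L_{A_n}(w)) \mid w \text{ a lasso-word}\}$. Since $A_E$ is deterministic with a single scc and all $A_i$ are $\LimInfAvg$-automata, for a lasso-word $w = w_1 \cdot w_2^\omega$ the run eventually cycles around a closed walk in $A_E$, and its vector value is the mean of the edge-weight tuples along that closed walk (the finite prefix $w_1$ does not affect the limit-average). I would prove the two inclusions separately.

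First I would show that every lasso-word value lies in $\conv(S_E)$. The run on $w_1 \cdot w_2^\omega$ follows, after its prefix, a closed walk $C$ in $A_E$. Any closed walk decomposes into simple cycles $\rho_1,\dots,\rho_k$ (with multiplicities), so the edge multiset of $C$ is a disjoint union of the edge multisets of these simple cycles; hence the average of $C$'s weight tuples is a convex combination $\sum_j \lambda_j \Avg(\rho_j)$ with $\lambda_j = |\rho_j|/|C| \geq 0$ summing to $1$. Thus the lasso-value $\Avg(C) \in \conv(S_E)$, and taking closure preserves this inclusion.

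For the reverse inclusion I would show $\conv(S_E)$ is contained in the closure of the lasso-word values; since $S_E$ itself consists of lasso-word values (traverse a prefix from $q_I$ into the simple cycle $\rho$, then repeat $\rho$ forever — this is a lasso-word with vector value $\Avg(\rho)$, using that $A_E$ has a single scc so every simple cycle is reachable), it suffices to realize arbitrary \emph{rational} convex combinations $\sum_j \lambda_j \Avg(\rho_j)$ as lasso-word values and then invoke density. Given rational weights $\lambda_j = p_j/m$, I would build a closed walk that, because $A_E$ lies in one scc, can be routed to traverse $\rho_j$ exactly $p_j \cdot N$ times (for a common multiple $N$ absorbing the lengths $|\rho_j|$) together with connecting paths between the cycles; the connecting paths have bounded total length, so by iterating the whole walk $t$ times and letting $t \to \infty$ the average of the weight tuples converges to $\sum_j \lambda_j \Avg(\rho_j)$. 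Each finite truncation-then-repeat gives a genuine lasso-word whose value is this average of a long closed walk, and these values converge to the target rational combination; hence every rational convex combination is a limit of lasso-word values, and by density of rationals and continuity every point of $\conv(S_E)$ is in the closure of lasso-word values.

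The main obstacle is the bookkeeping in the reverse direction: one must actually exhibit a single closed walk whose simple-cycle decomposition has the prescribed \emph{proportions} (not just the prescribed set of cycles), and handle the connecting paths needed to move between distinct simple cycles inside the scc without perturbing the limiting average. The clean way is to fix, for each ordered pair of cycles, a shortest connecting path, bound the total "overhead" length by a constant independent of the multiplicities, and then drive the cycle-repetition counts to infinity so the overhead becomes asymptotically negligible; the resulting averages form a sequence of lasso-word values converging to the desired point, which is exactly what "closure" requires. A minor point to note is that one genuinely needs the closure rather than equality: if the $\lambda_j$ are irrational, no single lasso-word attains $\sum_j \lambda_j \Avg(\rho_j)$ exactly, which is why the lemma is stated with closure.
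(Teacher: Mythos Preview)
Your proposal is correct and follows essentially the same approach as the paper: decompose the cyclic part of a lasso run into simple cycles to get one inclusion, and for the other inclusion build lasso-words that repeat the target simple cycles many times joined by fixed-length connecting paths inside the single scc, so the connectors' contribution vanishes in the limit. The paper handles arbitrary (possibly irrational) coefficients $\lambda_i$ directly via repetition counts $k_i^N=\lfloor N\lambda_i/|\rho_i|\rfloor$ and lets $N\to\infty$, rather than going through rational combinations plus density as you do; also note that your phrase ``iterating the whole walk $t$ times and letting $t\to\infty$'' is not quite right (repeating a fixed closed walk does not change its average), but your final paragraph makes clear you intend to drive the \emph{cycle-repetition counts} to infinity, which is exactly the paper's argument.
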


The vector set of $E$ contains more values than the convex hull $\conv(S_E)$,
as shown by the following example. 

\begin{figure}[!t]
\hrule
 \begin{center}
   \unitlength=.8mm
\def\fsize{\normalsize}

\begin{picture}(108,35)(0,0)

{\fsize

\node[Nmarks=i](x0)(12,10){$q_1$}
\node[Nmarks=N, ExtNL=y, NLangle=-90, NLdist=3](x0)(12,10){$A_1$}

\node[Nmarks=i](x1)(42,10){$q_2$}
\node[Nmarks=N, ExtNL=y, NLangle=-90, NLdist=3](x1)(42,10){$A_2$}

\drawloop[ELside=l, ELdist=1, loopCW=y, loopdiam=7, loopangle=90](x0){$\begin{array}{l}a,1 \\ b,0\end{array}$}
\drawloop[ELside=l, ELdist=1, loopCW=y, loopdiam=7, loopangle=90](x1){$\begin{array}{l}a,0 \\ b,1\end{array}$}


\gasset{Nh=1,Nw=1,Nmr=.5,fillgray=0, NLdist=2}

\node[Nframe=n, fillgray=0.6, ExtNL=y, NLangle=180](x00)(70,1){$(0,0)$}
\node[Nmarks=n, ExtNL=y](x01)(70,26){$(0,1)$}
\node[Nmarks=n, ExtNL=y, NLangle=0](x10)(95,1){$(1,0)$}
\drawedge[ELpos=55, ELside=l, ELdist=1, AHnb=0, curvedepth=0](x01,x10){$H=\conv(S_E)$}

\drawedge[ELpos=55, ELside=l, ELdist=1, AHnb=0, curvedepth=0, linegray=0.6, dash={1}0](x01,x00){}
\drawedge[ELpos=40, ELside=l, ELdist=1, AHnb=0, curvedepth=0, linegray=0.6, dash={1}0](x00,x10){$F_{\min}(H)$}

}
\end{picture}
 \end{center}
\hrule
 \caption{The vector set of $E = \max(A_1,A_2)$ is $F_{\min}(\conv(S_E)) \supsetneq \conv(S_E)$. \label{fig:value-set}}
\end{figure}
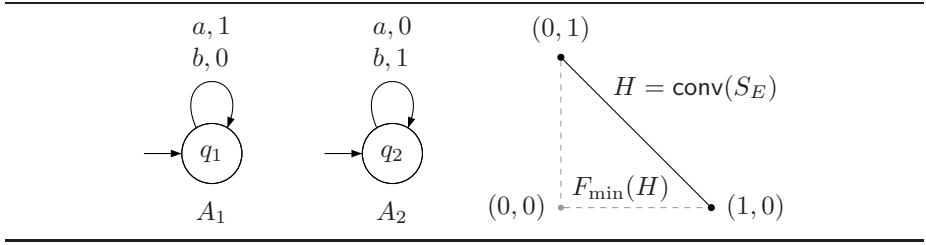

\begin{example}\label{ex1}
Consider the expression $E = \max(A_1,A_2)$ where $A_1$ and $A_2$ are 
deterministic $\LimInfAvg$-automata (see \figurename~\ref{fig:value-set}).
The product $A_E = A_1 \times A_2$ has two simple cycles with respective vector values
$(1,0)$ (on letter `$a$') and $(0,1)$ (on letter `$b$'). The set $H=\conv(S_E)$
is the solid segment on \figurename~\ref{fig:value-set} and contains the vector
values of all lasso-words. However, other vector values can be obtained: consider
the word $w = a^{n_1} b^{n_2} a^{n_3} b^{n_4} \dots$ where $n_1 = 1$ and $n_{i+1} = (n_1 + \cdots + n_i)^2$
for all $i \geq 1$. It is easy to see that the value of $w$ according to $A_1$ is $0$
because the average number of $a$'s in the prefixes $a^{n_1} b^{n_2}  \dots a^{n_i} b^{n_{i+1}}$
for $i$ odd is smaller than $\frac{n_1 + \cdots + n_i}{n_1 + \cdots + n_i + n_{i+1}} = \frac{1}{1+ n_1 + \cdots + n_i}$
which tends to $0$ when $i \to \infty$. Since $A_1$ is a $\LimInfAvg$-automaton,
the value of $w$ is $0$ in $A_1$, and by a symmetric argument the value of $w$ is also 
$0$ in $A_2$. Therefore the vector $(0,0)$ is in the vector set of $E$. 
Note that $z=(0,0)$ is the pointwise minimum of $x=(1,0)$ and $y=(0,1)$, i.e. $(0,0) = f_{\min}((1,0), (0,1))$
where $z = f_{\min}(x,y)$ if $z_1 = \min(x_1,y_1)$ and $z_2 = \min(y_1,y_2)$.
In fact, the vector set is the whole triangular region in \figurename~\ref{fig:value-set},
i.e. $V_E = \{ f_{\min}(x,y) \mid x,y \in \conv(S_E) \}$.
\end{example}

We generalize $f_{\min}$ to finite sets of points $P \subseteq \real^n$ in $n$ dimensions as follows: $f_{\min}(P) \in \real^n$
is the point $p = (p_1, p_2, \ldots, p_n)$ such that $p_i$ is the minimum $i^{\text{th}}$ coordinate of the points in $P$,
for $1 \leq i \leq n$. For arbitrary $S \subseteq \real^n$, define $F_{\min}(S) = \{f_{\min}(P) \mid P \text{ is a finite subset of } S\}$.
As illustrated in Example~\ref{ex1}, the next lemma shows that the vector set $V_E$ is equal to $F_{\min}(\conv(S_E))$.





\begin{lemma}\label{lem:value-set}
Let $E$ be a mean-payoff automaton expression built from deterministic $\LimInfAvg$-automata, and such that
$A_E$ has only one strongly connected component. 
Then, the vector set of $E$ is $V_E = F_{\min}(\conv(S_E))$.
\end{lemma}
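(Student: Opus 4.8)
\textbf{Proof plan for Lemma~\ref{lem:value-set}.}
The plan is to prove the two inclusions $V_E \subseteq F_{\min}(\conv(S_E))$ and $F_{\min}(\conv(S_E)) \subseteq V_E$ separately, using Lemma~\ref{lem:convex-hull-lasso-words} as the bridge between combinatorics of cycles and geometry.

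\smallskip\noindent\emph{The inclusion $F_{\min}(\conv(S_E)) \subseteq V_E$.}
Take a point $z = f_{\min}(P)$ with $P = \{x^{(1)}, \dots, x^{(k)}\}$ a finite subset of $\conv(S_E)$. By Lemma~\ref{lem:convex-hull-lasso-words}, each $x^{(j)}$ is a limit of vector values $L_E(u^{(j)})$ of lasso-words, and in fact (since $\conv(S_E)$ is a rational polytope and we may approximate) we can pick lasso-words $w^{(j)}$ whose vector values are arbitrarily close to $x^{(j)}$. The idea, generalizing Example~\ref{ex1}, is to build a single word $w = v_1 v_2 v_3 \cdots$ by concatenating longer and longer blocks, cycling through the periodic parts of $w^{(1)}, \dots, w^{(k)}$, where block lengths grow fast enough (e.g.\ each new block dwarfs the sum of all previous ones, as in the $n_{i+1} = (n_1+\cdots+n_i)^2$ trick) so that for each automaton $A_i$, the $\liminf$ of the running average is pinned down to the \emph{smallest} of the limit-averages that the blocks of type $j$ contribute in coordinate $i$. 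Since blocks of type $j$ contribute (in the limit) the $i$-th coordinate of $x^{(j)}$, the $\liminf$ in coordinate $i$ is $\min_j x^{(j)}_i = z_i$. Care is needed to route the word through $A_E$: because $A_E$ is strongly connected we can insert bounded-length connecting segments between blocks to move from the end state of one periodic part to the start state of the next, and these bounded insertions do not affect limit-averages. A small additional argument handles the approximation: one takes $w^{(j)}$ with vector value within $1/m$ of $x^{(j)}$ and lets $m\to\infty$ across successive rounds, so the exact limit is $z$.

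\smallskip\noindent\emph{The inclusion $V_E \subseteq F_{\min}(\conv(S_E))$.}
Fix any word $w$ and let $z = \langle L_{A_1}(w), \dots, L_{A_n}(w)\rangle$. For each $i$, $L_{A_i}(w) = \liminf_n \frac1n \sum_{t<n} \weight_i(\text{$t$-th edge of the run})$, so there is an increasing sequence of times $(N^i_\ell)_\ell$ along which the average in coordinate $i$ converges to $z_i$. The prefix of the run of $A_E$ up to time $N$ decomposes (by repeatedly extracting simple cycles, i.e.\ the standard cycle-decomposition of a closed-ish walk, plus a bounded acyclic remainder) into simple cycles of $A_E$; hence the average weight vector of that prefix lies within $O(1/N)$ of $\conv(S_E)$. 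Along the subsequence $N^i_\ell$ the $i$-th coordinate of this near-point of $\conv(S_E)$ tends to $z_i$. Using compactness of $\conv(S_E)$, pass to a common sub-subsequence so that for each $i$ we obtain a point $x^{(i)} \in \conv(S_E)$ which is a cluster point of the prefix-averages along the times $N^i_\ell$; then $x^{(i)}_i = z_i$, while for every other coordinate $j$, because $z_j$ is the $\liminf$ of the $j$-th running average, every cluster point of that average is $\ge z_j$, so $x^{(i)}_j \ge z_j$. Taking $P = \{x^{(1)}, \dots, x^{(n)}\}$ we get $f_{\min}(P)_i = \min_j x^{(j)}_i$, which equals $z_i$ (the term $j=i$ gives $z_i$ and all others are $\ge z_i$). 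Hence $z = f_{\min}(P) \in F_{\min}(\conv(S_E))$.

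\smallskip\noindent\emph{Main obstacle.}
The delicate point is the first inclusion: engineering the block lengths so that the $\liminf$ of the running average in \emph{every} coordinate simultaneously equals the intended minimum, while also threading the word through the (single) scc of $A_E$ via bounded connectors and handling the convex-combination approximation of the $x^{(j)}$ by lasso-words. One must check that the fast-growth schedule forces the running average, evaluated at the \emph{end} of each block, down to near the relevant per-type coordinate value, and that it cannot dip below it — i.e.\ that the $\liminf$ is exactly the minimum and not something smaller. The second inclusion is more routine, the only subtlety being the joint passage to subsequences and the observation that $\liminf$ forces all cluster points of the other coordinates to lie above $z_j$.
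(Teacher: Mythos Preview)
Your proposal is correct and follows essentially the same two-inclusion strategy as the paper: build a single word from fast-growing blocks of approximating lasso cycles (connected via the single scc) for $F_{\min}(\conv(S_E)) \subseteq V_E$, and extract one witness point of $\conv(S_E)$ per coordinate from prefix averages for $V_E \subseteq F_{\min}(\conv(S_E))$. The only cosmetic difference is that for the latter inclusion the paper closes each well-chosen prefix into an explicit lasso-word and invokes Lemma~\ref{lem:convex-hull-lasso-words}, while you reach the same witnesses $x^{(i)}\in\conv(S_E)$ more directly via cycle decomposition of prefixes and compactness.
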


For a general mean-payoff automaton expression~$E$ (with both deterministic $\LimInfAvg$- and $\LimSupAvg$ automata, and
with multi-scc underlying graph), we can use the result of Lemma~\ref{lem:value-set} as follows.
We replace each $\LimSupAvg$ automaton $A_i$ occurring in $E$ by the $\LimInfAvg$ automaton $A'_i$ obtained
from $A_i$ by replacing every weight $\weight$ by $-\weight$. The duality of $\liminf$ and $\limsup$ yields
$L_{A'_i} = - L_{A_i}$. In each strongly connected component $\C$ of the underlying graph of $A_E$, we compute 
$V_{\C} = F_{\min}(\conv(S_{\C}))$ (where $S_{\C}$ is the set of vector values of the simple cycles in $\C$) and apply 
the transformation $x_i \to -x_i$ on every coordinate~$i$ where the automaton
$A_i$ was originally a $\LimSupAvg$ automaton. The union of the sets $\bigcup_{\C} V_{\C}$ where $\C$ ranges over
the strongly connected components of $A_E$ gives the vector set of $E$.

\begin{theorem}\label{theo:value-set-general}
Let $E$ be a mean-payoff automaton expression built from deterministic $\LimInfAvg$-automata, and let
$\Z$ be the set of strongly connected components in $A_E$. 
For a strongly connected component $\C$ let $S_\C$ denote the set of vector values of the simple cycles 
in $\C$.
The vector set of $E$ is $V_E = \bigcup_{\C \in \Z} F_{\min}(\conv(S_\C))$.
\end{theorem}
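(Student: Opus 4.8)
The plan is to reduce Theorem~\ref{theo:value-set-general} to Lemma~\ref{lem:value-set} by a decomposition of words according to the strongly connected component in which they eventually settle. First I would recall that $A_E$ is deterministic (being a product of deterministic $\LimInfAvg$-automata), so every word $w \in \Sigma^\omega$ induces a unique run $r$ in $A_E$. Since $A_E$ has finitely many states, the set of states visited infinitely often along $r$ is nonempty and, being a strongly connected subset reachable from itself, is contained in a unique strongly connected component $\C$ of $A_E$. Moreover, since limit-average values are insensitive to finite prefixes (as observed in the paragraph defining the mean-payoff value), the vector $\tuple{L_{A_1}(w),\ldots,L_{A_n}(w)}$ depends only on the tail of $r$ that stays inside $\C$. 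This gives the inclusion $V_E \subseteq \bigcup_{\C \in \Z} V_\C$, where $V_\C$ denotes the vector set computed by considering only words whose run eventually stays in $\C$.

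Next I would prove each inclusion separately. For $\bigcup_\C V_\C \subseteq V_E$: fix a component $\C$ and a vector $v \in V_\C$. By definition $v$ is realized by a word $w'$ whose run eventually remains in $\C$; prefixing a finite path from the initial state $(q^1_I,\ldots,q^n_I)$ of $A_E$ into $\C$ (which exists because $\C$ is reachable) yields a word $w$ with the same value vector, so $v \in V_E$. For the reverse inclusion, given $w \in V_E$ with run settling in $\C$, the tail of $w$ from the point it enters $\C$ and never leaves is a word whose run lies entirely in $\C$; by prefix-insensitivity its value vector equals that of $w$, and it witnesses $v \in V_\C$. Combining, $V_E = \bigcup_{\C \in \Z} V_\C$.

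Finally I would identify $V_\C$ with $F_{\min}(\conv(S_\C))$ by applying Lemma~\ref{lem:value-set} to the sub-automaton $A_\C$ obtained by restricting $A_E$ to $\C$ (with some state of $\C$ designated as initial). This sub-automaton need not be total, but as just argued its vector set equals the set of value vectors of words that run forever inside $\C$, so the argument of Lemma~\ref{lem:value-set} applies verbatim once we note that the simple cycles of $A_\C$ are exactly the simple cycles of $A_E$ lying in $\C$, whence $S_{A_\C} = S_\C$. This yields $V_\C = F_{\min}(\conv(S_\C))$ and the theorem follows. The step I expect to require the most care is the prefix-insensitivity argument: one must check not merely that a single finite prefix change leaves the $\LimInfAvg$ value unchanged, but that splicing in an arbitrarily long (yet finite) path to reach $\C$ and deleting the finite prefix before $\C$ is permanently entered both preserve every coordinate $L_{A_i}$ simultaneously — which is immediate from the definition of $\LimInfAvg$ but should be stated explicitly to keep the scc decomposition rigorous.
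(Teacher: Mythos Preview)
Your proposal is correct and follows the same approach as the paper. The paper itself gives only a one-paragraph sketch (the text immediately preceding the theorem) that says to compute $V_{\C} = F_{\min}(\conv(S_{\C}))$ in each strongly connected component and take the union; your write-up fleshes out precisely this reduction to Lemma~\ref{lem:value-set}, including the prefix-insensitivity and the handling of the non-total restricted automaton $A_\C$. One small point you should make explicit (the paper glosses over it too): the union in the statement implicitly ranges over \emph{reachable} strongly connected components, since an unreachable component with cycles would contribute a nonempty $F_{\min}(\conv(S_\C))$ not witnessed by any word; trivial components (a single state with no self-loop) cause no trouble because then $S_\C=\emptyset$.
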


\section{Computation of $F_{\min}(\conv(S))$ for a Finite Set $S$}\label{sec:alg-cons}
It follows from Theorem~\ref{theo:value-set-general} that the vector set $V_E$ of a mean-payoff automaton expression~$E$ can
be obtained as a union of sets $F_{\min}(\conv(S))$, where $S \subseteq \real^n$ is a finite set. 
However, the set $\conv(S)$ being in general infinite, it is not immediate that 
$F_{\min}(\conv(S))$ is computable. 
In this section we consider the problem of computing $F_{\min}(\conv(S))$ for a finite set $S$.
In subsection~\ref{subsec:explicit} we present an explicit construction and 
in subsection~\ref{subsec:geometric} we give a geometric construction of 
the set as a set of linear constraints.
We first present some properties of the set $F_{\min}(\conv(S))$.

\begin{lemma}\label{lem:value-set-convex}
If $X$ is a convex set, then $F_{\min}(X)$ is convex.
\end{lemma}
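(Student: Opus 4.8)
The plan is to show that $F_{\min}(X)$ is convex directly from the definition, by taking two points in $F_{\min}(X)$, writing each as a pointwise minimum of a finite subset of $X$, and exhibiting a finite subset of $X$ whose pointwise minimum is any prescribed convex combination of the two. So let $y, y' \in F_{\min}(X)$, say $y = f_{\min}(P)$ and $y' = f_{\min}(P')$ for finite sets $P = \{p^1,\dots,p^k\} \subseteq X$ and $P' = \{p'^1,\dots,p'^m\} \subseteq X$, and fix $\lambda \in [0,1]$. I want to realize $\lambda y + (1-\lambda) y'$ as $f_{\min}$ of a finite subset of $X$.

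The key observation is that for each coordinate $i$, there is an index $a(i)$ with $y_i = p^{a(i)}_i$ (a point of $P$ achieving the minimum in coordinate $i$), and similarly an index $b(i)$ with $y'_i = p'^{b(i)}_i$. Consider the finite set $Q = \{\, \lambda p^{a(i)} + (1-\lambda) p'^{b(i)} \mid 1 \le i \le n \,\} \subseteq X$; this lies in $X$ because $X$ is convex and $p^{a(i)}, p'^{b(i)} \in X$. I claim $f_{\min}(Q) = \lambda y + (1-\lambda) y'$. For the $i$-th coordinate of $f_{\min}(Q)$: the point indexed by $i$ contributes $\lambda p^{a(i)}_i + (1-\lambda) p'^{b(i)}_i = \lambda y_i + (1-\lambda) y'_i$ in coordinate $i$; and any other point $\lambda p^{a(j)} + (1-\lambda)p'^{b(j)}$ in $Q$ has $i$-th coordinate $\lambda p^{a(j)}_i + (1-\lambda) p'^{b(j)}_i \ge \lambda y_i + (1-\lambda) y'_i$, since $p^{a(j)}_i \ge \min_\ell p^\ell_i = y_i$ and $p'^{b(j)}_i \ge y'_i$. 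Hence the minimum over $Q$ in coordinate $i$ equals $\lambda y_i + (1-\lambda) y'_i$, as desired, and this holds for every $i$, so $f_{\min}(Q) = \lambda y + (1-\lambda) y' \in F_{\min}(X)$.

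The argument is essentially elementary once one notices the trick of pairing up, for each coordinate, the minimizer from $P$ with the minimizer from $P'$ and forming the corresponding convex combination; the only point needing care is verifying that this finite set's pointwise minimum is neither too small (some other point drags a coordinate down) nor too large, which is exactly the inequality check above. There is no real obstacle here; the statement is a warm-up lemma, and the main subtlety — that $F_{\min}$ does not commute with $\conv$ in dimension $\ge 3$ — is a separate matter addressed later (Example~\ref{examp2}). One could alternatively reduce to the two-point case $P = \{p\}$, $P' = \{p'\}$ and then bootstrap, but the direct argument above already handles arbitrary finite $P, P'$ in one stroke.
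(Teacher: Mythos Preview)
Your proof is correct and follows essentially the same approach as the paper: for each coordinate $i$, form the convex combination of the coordinate-$i$ minimizers from the two witnessing sets, and verify that the resulting finite set $Q\subseteq X$ has $f_{\min}(Q)=\lambda y+(1-\lambda)y'$. The paper's write-up only differs cosmetically, starting directly with $n$-point representations $x=f_{\min}(u^1,\dots,u^n)$, $y=f_{\min}(v^1,\dots,v^n)$ where $u^i,v^i$ realize the $i^{\text{th}}$ coordinate, which is exactly your $p^{a(i)},p'^{b(i)}$.
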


By Lemma~\ref{lem:value-set-convex}, the set $F_{\min}(\conv(S))$ is convex, and since 
$F_{\min}$ is a monotone operator and $S \subseteq \conv(S)$, we have  $F_{\min}(S) \subseteq F_{\min}(\conv(S))$
and thus $\conv(F_{\min}(S)) \subseteq F_{\min}(\conv(S))$.
The following proposition states that in two dimensions the above sets coincide.

\begin{proposition}\label{prop:conv-min-2D}
Let $S \subseteq \real^2$ be a finite set.
Then, $\conv(F_{\min}(S)) = F_{\min}(\conv(S))$.
\end{proposition}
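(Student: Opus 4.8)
We already have the inclusion $\conv(F_{\min}(S)) \subseteq F_{\min}(\conv(S))$, so the plan is to establish the reverse inclusion $F_{\min}(\conv(S)) \subseteq \conv(F_{\min}(S))$ when $S \subseteq \real^2$. Take an arbitrary point $p \in F_{\min}(\conv(S))$; by definition $p = f_{\min}(P)$ for some finite $P \subseteq \conv(S)$. Since $\conv(S)$ is convex and $F_{\min}$ involves only a coordinatewise minimum over $P$, only the point of $P$ minimizing the first coordinate and the point minimizing the second coordinate matter; hence it suffices to treat the case $p = f_{\min}(x,y)$ with $x, y \in \conv(S)$, i.e.\ $p = (\min(x_1,y_1), \min(x_2,y_2))$. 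If $x$ and $y$ are comparable in the product order the claim is immediate since then $p \in \{x,y\} \subseteq \conv(S)$ and $F_{\min}(S) \supseteq$ a dominating point argument works directly; the interesting case is $x_1 \le y_1$ and $x_2 \ge y_2$, so $p = (x_1, y_2)$.

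The key geometric step is then the following: I would write $x = \sum_i \lambda_i s_i$ and $y = \sum_j \mu_j s_j$ as convex combinations of points of $S$, and show that $p = (x_1, y_2)$ lies in $\conv(F_{\min}(S))$. The idea is to use that in $\real^2$ the point $p$ is ``pushed down and left'' from both $x$ and $y$: for each $s_i$ appearing in the combination for $x$, form its coordinatewise min with an appropriately chosen point from the combination for $y$ (or vice versa), obtaining points of $F_{\min}(S)$, and show $p$ is a convex combination of these. Concretely, I expect the cleanest route is a sweeping/exchange argument: order the support points of $x$ and $y$ by first coordinate, and match them greedily so that the convex combination of the pairwise $f_{\min}$'s has first coordinate $x_1$ (dominated by the $x$-side) and second coordinate $y_2$ (dominated by the $y$-side). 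An alternative that may be smoother: observe that $f_{\min}(x,y)$ for $x,y \in \conv(S)$ can be obtained as a limit of $f_{\min}$ of points in $S$ — pick $s \in S$ achieving the minimum first coordinate over $\mathrm{supp}(x)$ and $s' \in S$ achieving the minimum second coordinate over $\mathrm{supp}(y)$; then $f_{\min}(s,s')$ has first coordinate $\le x_1$ and second coordinate $\le y_2$, and by averaging such extremal choices against points that pull the coordinates back up, reconstruct $p$ exactly.

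The main obstacle is making the convex-combination reconstruction precise: one must produce an \emph{exact} convex combination of points of $F_{\min}(S)$ equal to $p$, not merely a point dominated by $p$ or a limit. In two dimensions this works because the boundary of $\conv(S)$ is a polygon and the ``lower-left'' staircase $F_{\min}(\conv(S))$ is bounded by segments each of which can be realized using only two source points of $S$ at a time; the crucial fact is that in $\real^2$, if $p = f_{\min}(x,y)$ with $x,y$ in a convex polygon, then the segment of the staircase through $p$ is spanned by $f_{\min}$'s of vertices, so $p$ is a convex combination of at most two such points. I would isolate this as the heart of the argument and prove it by an explicit case analysis on which coordinates of which support points are active, using that a $2$-dimensional convex combination with prescribed values of both coordinates can always be reduced to a combination of two points (Carathéodory on the relevant face). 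This planar reduction is exactly what fails in $\real^3$ (Example~\ref{examp2}), so no attempt should be made to phrase it dimension-independently.
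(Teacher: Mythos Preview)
Your plan correctly isolates the reduction to $p=f_{\min}(x,y)=(x_1,y_2)$ with $x,y\in\conv(S)$, $x_1\le y_1$, $x_2\ge y_2$, but then over-engineers the remaining step. You propose to decompose $x$ and $y$ into convex combinations of points of $S$ and carry out a sweeping/matching argument, or to rebuild $p$ from several extremal $f_{\min}$'s of support points. Both routes are plausible, but as written they are only sketches: you never specify the matching or verify that the resulting combination hits $p$ exactly, and the ``Carath\'eodory on the relevant face'' remark does not by itself produce points of $F_{\min}(S)$.

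The paper's proof rests on a simplification you do not mention: since $S\subseteq F_{\min}(S)$ (via singletons), one has $\conv(S)\subseteq\conv(F_{\min}(S))$, so $x$ and $y$ themselves already lie in $\conv(F_{\min}(S))$. Hence it suffices to exhibit a \emph{single} point $r\in F_{\min}(S)$ with $r_1\le x_1$ and $r_2\le y_2$, because then $p=(x_1,y_2)$ lies in the triangle $\conv(\{x,y,r\})\subseteq\conv(F_{\min}(S))$. Such an $r$ is immediate: take $u\in S$ of minimal first coordinate and $v\in S$ of minimal second coordinate, and set $r=f_{\min}(u,v)$. That $p\in\conv(\{x,y,r\})$ is a two-line linear solve (or: the segment $xr$ meets the horizontal through $p$ at a point to the left of $p$, whence $p$ lies on the segment from that point to $y$). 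No decomposition of $x,y$ into points of $S$, no matching, no limits.

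So the missing idea is not a fatal flaw in your direction, but it is what turns the outline into a three-line proof: use $x$, $y$, and one global ``lower-left'' anchor $r\in F_{\min}(S)$, rather than trying to reconstruct $p$ from many pairwise $f_{\min}$'s of support points.
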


%

We show in the following example that in three dimensions the above proposition does not hold, i.e., we show 
that $F_{\min}(\conv(S_E)) \neq \conv(F_{\min}(S_E))$ in $\real^3$.

\begin{example}\label{examp2}
We show that in three dimension there is a finite set $S$ such that  
$F_{\min}(\conv(S)) \not\subseteq \conv(F_{\min}(S))$.
Let $S = \{q,r,s\}$ with $q=(0,1,0)$, $r=(-1,-1,1)$, and $s=(1,1,1)$.
Then $f_{\min}(r,s) = r$, $f_{\min}(q,r,s) = f_{\min}(q,r) = t = (-1,-1,0)$, and $f_{\min}(q,s) = q$.
Therefore $F_{\min}(S) = \{q,r,s,t\}$. Consider $p = (r+s)/2 = (0,0,1)$. We have $p \in \conv(S)$
and $f_{\min}(p,q) = (0,0,0)$. Hence $(0,0,0) \in F_{\min}(\conv(S))$.
We now show that $(0,0,0)$ does not belong to $\conv(F_{\min}(S))$.
Consider $u= \alpha_q \cdot q + \alpha_r \cdot r + \alpha_s \cdot s + \alpha_t \cdot t$ 
such that $u$ in $\conv(F_{\min}(S))$.
Since the third coordinate is non-negative for $q,r,s$, and $t$, it follows that if $\alpha_r>0$ 
or $\alpha_s>0$, then the third coordinate of $u$ is positive. 
If $\alpha_s=0$ and $\alpha_r=0$, then we have two cases:
(a) if $\alpha_t>0$, then the first coordinate of $u$ is negative; and
(b) if $\alpha_t=0$, then the second coordinate of $u$ is~1.
It follows $(0,0,0)$ is not in $\conv(F_{\min}(S))$.
\qed
\end{example}

\subsection{Explicit construction}\label{subsec:explicit}
Example~\ref{examp2} shows that in general $F_{\min}(\conv(S)) \not\subseteq \conv(F_{\min}(S))$. 
In this section we present an explicit construction that given a finite set $S$ constructs a 
finite set $S'$ such that (a)~$S \subseteq S' \subseteq \conv(S)$ and (b)~$F_{\min}(\conv(S)) \subseteq \conv(F_{\min}(S'))$.
It would follow that $F_{\min}(\conv(S))=\conv(F_{\min}(S'))$. 
Since convex hull of a finite set is computable and $F_{\min}(S')$ is finite, 
this would give us an algorithm to compute $F_{\min}(\conv(S))$.
For simplicity, for the rest of the section we write $F$ for $F_{\min}$ and 
$f$ for $f_{\min}$ (i.e., we drop the $\min$ from subscript).
Recall that $F(S)=\set{f(P) \mid  P \text{ finite subset of } S}$ 
and let $F_i(S)=\set{f(P) \mid  P \text{ finite subset of $S$ and } |P| \leq i }$.
We consider $S \subseteq \real^n$.

\begin{lemma}\label{lemm1}
Let $S \subseteq \real^n$. Then, $F(S)=F_n(S)$ and $F_n(S) \subseteq F_2^{n-1}(S)$.
\end{lemma}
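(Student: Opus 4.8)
The plan is to prove the two assertions $F(S)=F_n(S)$ and $F_n(S)\subseteq F_2^{n-1}(S)$ separately, where for a point $p\in F(S)$ we think of a finite witness $P=\{p^{(1)},\dots,p^{(k)}\}\subseteq S$ with $p=f(P)$, i.e. $p_j=\min_\ell p^{(\ell)}_j$ for each coordinate $j\in\{1,\dots,n\}$.

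For the first claim, $F_n(S)\subseteq F(S)$ is immediate from the definitions. For the reverse inclusion, take $p=f(P)$ with $P$ finite; I want to extract a subset of size at most $n$ achieving the same pointwise minimum. For each coordinate $j$, pick a point $p^{(\ell_j)}\in P$ that attains the minimum in coordinate $j$ (such a point exists since $P$ is finite and nonempty — if $P$ is empty there is nothing to prove, and we may assume $p\in F(S)$ arises from a nonempty $P$). Let $P'=\{p^{(\ell_1)},\dots,p^{(\ell_n)}\}$; then $\abs{P'}\le n$, $P'\subseteq S$, and $f(P')=f(P)=p$ because in every coordinate $j$ the chosen witness $p^{(\ell_j)}$ still realizes the minimum and no point of $P'$ can go below it (as all points of $P'$ lie in $P$). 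Hence $p\in F_n(S)$.

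For the second claim, the idea is to compute the $n$-fold pointwise minimum by iterating the binary operation $f$, rebracketing it as a balanced binary tree of depth $n-1$, and observing that each internal node of such a tree applies $f$ to (at most) two points already produced, so lies in $F_2$ of the previous level. Concretely: given $p=f(\{p^{(1)},\dots,p^{(n)}\})\in F_n(S)$, write $p=f(p^{(1)},f(p^{(2)},f(p^{(3)},\dots,f(p^{(n-1)},p^{(n)})\dots)))$ using associativity of coordinatewise $\min$. Reading this expression from the innermost application outward, the value $f(p^{(n-1)},p^{(n)})$ lies in $F_2(S)$; applying $f$ to it together with $p^{(n-2)}$ gives a point in $F_2(F_2(S)\cup S)\subseteq F_2^2(S)$ (using $S\subseteq F_2(S)$, which holds since any singleton is a finite subset of size $\le 2$); continuing, after $n-1$ nested applications of $f$ we land in $F_2^{\,n-1}(S)$. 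One should double-check the monotonicity bookkeeping: $F_2$ is monotone under set inclusion and $S\subseteq F_2(S)$, so $F_2^{k}(S)\subseteq F_2^{k+1}(S)$ and each successive application of $f$ to a new point of $S$ together with an element of $F_2^{k}(S)$ indeed produces an element of $F_2^{k+1}(S)$.

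The main obstacle — really the only subtlety — is the second inclusion's bookkeeping: one must be careful that $F_2^{n-1}$ means the $(n-1)$-fold iterate of the operator $S\mapsto F_2(S)$, and that the left-leaning (or any) bracketing of the $n$-ary minimum uses exactly $n-1$ binary steps, each of which is captured by one application of the operator $F_2$ once we know $S\subseteq F_2(S)$. Everything else is a routine consequence of the associativity and commutativity of coordinatewise $\min$.
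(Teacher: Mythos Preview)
Your proof is correct and follows essentially the same approach as the paper: for the first claim you extract one witness per coordinate, and for the second you iterate the binary $f$ using associativity of coordinatewise $\min$, exactly as the paper does (the paper uses the left-leaning bracketing $x_1=f(p_1,p_2)$, $x_i=f(x_{i-1},p_{i+1})$ rather than your right-leaning one, which is immaterial). One cosmetic slip: the phrase ``balanced binary tree of depth $n-1$'' is a misnomer---your explicit expression is a caterpillar (fully unbalanced) tree of depth $n-1$, which is indeed what you need, so the argument itself is unaffected.
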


\smallskip\noindent{\bf Iteration of a construction $\constr$.} 
We will present a 
construction $\constr$ with the following properties: input to the construction
is a finite set $Y$ of points, and the output $\constr(Y)$ satisfies the 
following properties
\begin{enumerate}
\item {\bf (Condition C1).} $\constr(Y)$ is finite and subset of $\convk(Y)$. 
\item {\bf (Condition C2).} $F_2(\convk(Y)) \subseteq \convk(F (\constr(Y)))$.
\end{enumerate}
Before presenting the construction $\constr$ we first show how to iterate the
construction to obtain the following result: given a finite set of points $X$
we construct a finite set of points $X'$ such that 
$F(\convk (X)) = \convk(F (X'))$.

\smallskip\noindent{\em Iterating $\constr$.} 
Consider a finite set of points $X$, 
and let $X_0=X$ and $X_1=\constr(X_0)$. Then we have 
\[
\convk(X_1) \subseteq \convk (\convk (X_0)) \quad 
\text{(since by Condition {\bf C1} we have } 
X_1 \subseteq \convk (X_0))
\]
and hence $\convk(X_1) \subseteq \convk(X_0)$; and 
\[
F_2(\convk(X_0)) \subseteq \convk(F(X_1))
\qquad 
\text{(by Condition {\bf C2})} 
\]
By iteration we obtain that $X_i=\constr(X_{i-1})$ for $i \geq 2$ and as above
we have 
\[
(1)\ \convk(X_i) \subseteq \convk(X_0) \qquad 
(2)\ F_2^i(\convk(X_0)) \subseteq \convk(F(X_i))
\] 
Thus for $X_{n-1}$ we have
\[
(1)\ \convk(X_{n-1}) \subseteq \convk(X_0) \qquad 
(2)\ F_2^{n-1}(\convk(X_0)) \subseteq \convk(F(X_{n-1}))
\] 
By (2) above and Lemma~\ref{lemm1}, we obtain 
\[
(A)\ F(\convk (X_0)) = F_n(\convk(X_0)) \subseteq F_2^{n-1}(\convk(X_0)) 
\subseteq \convk(F(X_{n-1}))
\]
By (1) above we have $\convk(X_{n-1}) \subseteq \convk (X_0)$ and hence
$F(\convk(X_{n-1})) \subseteq F(\convk (X_0))$.
Thus we have 
\[
\convk(F(\convk(X_{n-1}))) \subseteq \convk (F(\convk (X_0))) = F(\convk(X_0)) 
\]
where the last equality follows since by Lemma~\ref{lem:value-set-convex} we have 
$F(\convk(X_0))$ is convex.
Since $X_{n-1} \subseteq \convk(X_{n-1})$ we have 
\[
(B)\ 
\convk(F(X_{n-1})) 
\subseteq \convk(F(\convk(X_{n-1}))) \subseteq F(\convk(X_0))
\]
Thus by (A) and (B) above we have 
$F(\convk(X_0)) = \convk(F(X_{n-1}))$.
Thus given the finite set $X$, we have the finite set $X_{n-1}$ such 
that (a)~$X \subseteq X_{n-1} \subseteq \conv(X)$ and (b)~$F(\conv(X)) = \conv(F(X_{n-1}))$.  
We now present the construction $\constr$ to complete the result.

\smallskip\noindent{\bf The construction $\constr$.}
Given a finite set $Y$ of points $Y'=\constr(Y)$ is obtained by adding points 
to $Y$ in the following way:
\begin{itemize}
\item For all $1 \leq k \leq n$, we consider all $k$-dimensional coordinate 
planes $\Pi$ supported by a point in $Y$;
\item Intersect each coordinate plane $\Pi$ with $\convk(Y)$ and the result is 
a convex polytope $Y_\Pi$;
\item We add the corners (or extreme points) of each polytope $Y_\Pi$ to $Y$.
\end{itemize}
The proof that the above construction satisfies condition {\bf C1} and {\bf C2} is
given in the appendix, and thus we have the following result.

\begin{theorem}\label{theo:explicit-construction}
Given a finite set $S \subseteq \real^n$ such that $|S|=m$, 
the following assertion holds: 
a finite set $S'$ with $|S'| \leq m^{2^n} \cdot 2^{n^2+n}$ can be 
computed in $m^{O(n \cdot 2^n)} \cdot 2^{O(n^3)}$ time such that 
(a)~$S \subseteq S' \subseteq \conv(S)$ and 
(b)~$F_{\min}(\conv(S))= \conv(F_{\min}(S'))$.
\end{theorem}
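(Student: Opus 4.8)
The plan is to combine the iteration scheme already spelled out in the text with a verification that the construction $\constr$ satisfies conditions \textbf{C1} and \textbf{C2}, and then to bound the size and running time by tracking how many points $\constr$ adds at each of the $n-1$ iterations. The correctness part \emph{(b)} is essentially already done in the excerpt modulo the two conditions: once we know $\constr(Y)$ is finite, lies in $\convk(Y)$, and satisfies $F_2(\convk(Y)) \subseteq \convk(F(\constr(Y)))$, the telescoping argument given before the statement yields $F(\convk(X)) = \convk(F(X_{n-1}))$ with $X_{n-1}$ obtained by applying $\constr$ exactly $n-1$ times, and the inclusions $X \subseteq X_{n-1} \subseteq \conv(X)$ follow from \textbf{C1} by induction. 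So the real work is (i) proving \textbf{C1} and \textbf{C2}, and (ii) the quantitative bookkeeping.

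For \textbf{C1}: $\convk(Y)$ is a polytope with finitely many vertices, and for each of the finitely many coordinate subspaces $\Pi$ (there are at most $\sum_{k=1}^{n}\binom{n}{k}\abs{Y}^{\,k} \le (1+\abs{Y})^n$ translated coordinate planes through points of $Y$, counting the choice of which $k$ coordinates are free and which point of $Y$ fixes the remaining ones), the intersection $Y_\Pi = \Pi \cap \convk(Y)$ is again a polytope, hence has finitely many vertices, each lying in $\convk(Y)$; adding these finitely many vertices keeps the set finite and inside $\convk(Y)$. For \textbf{C2}, take $x = f_{\min}(a,b)$ with $a,b \in \convk(Y)$; I want to show $x \in \convk(F(\constr(Y)))$. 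The idea is to consider the coordinate plane $\Pi$ through $x$ spanned by those coordinates $i$ where $x_i$ equals, say, $a_i$ (ties broken arbitrarily), so that on the complementary coordinates $x$ agrees with $b$; then $x$ lies in a face-like slice, and one argues that the segment from $a$ (or a point obtained by pushing $a$ down onto $\Pi$ along the relevant axes, which lands in $Y_\Pi$ after intersecting with $\convk(Y)$) to the analogous point from $b$ realizes $x$ as $f_{\min}$ of two points each expressible as a convex combination of corners of the $Y_\Pi$'s, i.e. of points of $\constr(Y)$. Pushing the convex-combination structure through $f_{\min}$ coordinatewise and using that on $\Pi$ the minimum is attained "consistently" gives $x \in \convk(F(\constr(Y)))$. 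This geometric step — making precise that the projection of $a$ onto the coordinate plane $\Pi$ along the axes where $b$ wins stays inside $\convk(Y)$, so that it is a convex combination of corners of $Y_\Pi \subseteq \constr(Y)$ — is the main obstacle, and is exactly the content deferred to the appendix.

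For the size and time bounds in part \emph{(a)}: if $\abs{Y} = m'$, then $\constr$ considers at most $(1+m')^n$ coordinate planes, each slice $Y_\Pi$ being a polytope in a space of dimension $\le n$ defined by $\le m'$ (or rather $m'$ plus the coordinates describing $\convk(Y)$) constraints, so it has at most $\binom{m'}{n} = m'^{O(n)}$ vertices, computable in $m'^{O(n)}$ time by vertex enumeration; hence $\abs{\constr(Y)} \le m' + (1+m')^n \cdot m'^{O(n)} = m'^{O(n)}$, so $\abs{\constr(Y)} \le c \cdot m'^{\,d}$ for suitable constants $c = 2^{O(n)}$, $d = O(n)$ — one should track $c,d$ carefully so that after the $n-1$ iterations the size is $m^{d^{\,n-1}} \cdot c^{O(d^{\,n-1})}$, which with $d$ linear in $n$ gives the stated $m^{2^n}\cdot 2^{n^2+n}$ form (one checks $d^{\,n-1}$ can be taken to be $2^n$ by being slightly generous, and the prefactor exponent is then $n^2+n$). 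Iterating $n-1$ times, the dominant cost is the last step, bounded by $(\abs{X_{n-2}})^{O(n)} = m^{O(n\cdot 2^n)}\cdot 2^{O(n^3)}$ time, matching the claim. The claim $F_{\min}(\conv(S)) = \conv(F_{\min}(S'))$ in \emph{(b)} is then immediate from the iteration argument with $S' = X_{n-1}$.
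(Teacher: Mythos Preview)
Your overall plan is right — iterate $\constr$ and verify \textbf{C1}, \textbf{C2} — but two of the three substantive pieces are not actually carried out, and one of them is quantitatively wrong.

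\medskip
\noindent\textbf{The argument for C2 is missing, and the sketch does not match what is needed.}
You describe ``pushing $a$ down onto $\Pi$'' and hoping the result lands in $\convk(Y)$; this step is exactly the heart of the matter and your sketch does not establish it (indeed there is no reason the coordinatewise ``push-down'' of a point of $\convk(Y)$ stays in $\convk(Y)$). The paper's proof of \textbf{C2} proceeds quite differently: since $F_2(\convk(Y))$ is convex, it suffices to show that every \emph{extreme point} $x=f(p,q)$ of $F_2(\convk(Y))$ has $p,q\in Y'=\constr(Y)$. This is proved by induction on the number $k$ of coordinates where $p$ and $q$ tie. In the inductive step one considers the $(n-k)$-dimensional coordinate plane $\Pi$ through the common tied values and intersects it with $\convk(Y)$; if $p$ (or $q$) is not a corner of this slice one can slide it along a line in the slice, and then one argues case by case that either $x$ itself moves along a line (so $x$ was not extreme), or the sliding creates an additional tie (invoke induction), or $p$ reaches a corner of the slice (which lies in $Y'$ by construction). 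The remaining case where both $p$ and $q$ are corners but $\Pi$ is not supported by a point of $Y$ is handled by sliding $\Pi$ itself. This extreme-point / tie-induction argument is what you are missing.

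\medskip
\noindent\textbf{The size bound does not give the stated exponent.}
You bound $\abs{\constr(Y)}$ by $(\abs{Y})^{O(n)}$ (roughly: $(1+\abs{Y})^n$ planes times $\abs{Y}^{O(n)}$ vertices per slice). Iterating $n-1$ times then gives $\abs{X_{n-1}}\le m^{(O(n))^{\,n-1}}$, and $(O(n))^{\,n-1}$ is of order $n^{\Theta(n)}$, not $2^n$; your sentence ``one checks $d^{\,n-1}$ can be taken to be $2^n$'' does not hold when $d$ grows linearly in $n$. The paper's bound is sharper because it is \emph{quadratic} in $\abs{Y}$ per step: there are at most $\abs{Y}\cdot 2^n$ coordinate planes (one for each point of $Y$ and each coordinate subset), and each slice $\Pi\cap\convk(Y)$ contributes at most $\abs{Y}$ new corners (argued via the ``shadow'' of $\convk(Y)$ on $\Pi$), so $\abs{\constr(Y)}\le 2\abs{Y}^{2}\cdot 2^{n}$. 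Iterating a recurrence $m_{i+1}\le C\cdot m_i^{2}$ with $C=2^{n+1}$ for $n$ steps gives $m^{2^{n}}\cdot 2^{O(n\cdot 2^{n})}$, which is what underlies the stated $m^{2^{n}}\cdot 2^{n^{2}+n}$ and the $m^{O(n\cdot 2^{n})}\cdot 2^{O(n^{3})}$ running time. The exponent $2$ (not $O(n)$) in the per-step growth is essential; without it you cannot recover the theorem's bounds.
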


\subsection{Linear constraint construction}\label{subsec:geometric}
In the previous section we presented an explicit construction of a finite 
set of points whose convex hull gives us $F_{\min}(\conv(S))$. 
The explicit construction illuminates properties of the set $F_{\min}(\conv(S))$,
however, the construction is inefficient computationally.
In this subsection we present an efficient geometric construction for the 
computation of $F_{\min}(\conv(S))$ for a finite set $S$.
Instead of constructing a finite set $S'\subseteq \conv(S)$ such that 
$\conv(S') =F_{\min}(\conv(S))$, we represent $F_{\min}(\conv(S))$ as a finite set of 
linear constraints.



Consider the \emph{positive orthant} anchored at the origin
in $\real^n$, that is, the set of points with non-negative
coordinates:
  $\real_+^n =  \{ (z_1, z_2, \ldots, z_n) \mid z_i \geq 0, \forall i \}$.
Similarly, the \emph{negative orthant} is the set of points with
non-positive coordinates, denoted as $\real_-^n = - \real_+^n$.
Using vector addition, we write $y + \real_+^n$ for the positive
orthant anchored at $y$.
Similarly, we write $x + \real_-^n = x - \real_+^n$
for the negative orthant anchored at $x$.
The positive and negative orthants satisfy the following simple
\emph{duality relation}:
%
  $x \in y + \real_+^n$ iff $y \in x - \real_+^n$.


Note that $\real_+^n$ is an $n$-dimensional convex polyhedron.
For each $1 \leq j \leq n$, we consider the $(n-1)$-dimensional face $\ratl_j$ spanned
by the coordinate axes except the $j^{\text{th}}$ one, that is,
  $\ratl_j =  \{ (z_1, z_2, \ldots, z_n) \in \real_+^n \mid z_j = 0 \}$.

We say that $y + \real_+^n$ is \emph{supported} by $X$
if $(y + \ratl_j) \capsp X \neq \emptyset$ for every $1 \leq j \leq n$.
Assuming $y + \real_+^n$ is supported by $X$, we can construct a
set $Y \subseteq X$ by collecting one point per
$(n-1)$-dimensional face of the orthant and get $y = f(Y)$.
It is also allowed that two faces contribute the same point to $Y$.
Similarly, if $y = f(Y)$ for a subset $Y \subseteq X$,
then the positive orthant anchored at $y$ is supported by $X$.
Hence, we get the following lemma.

\begin{lemma}[Orthant Lemma]
  $y \in F_{\min}(X)$ iff $y + \real_+^n$ is supported by $X$.
\end{lemma}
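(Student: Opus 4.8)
The Orthant Lemma is essentially a reformulation of the definitions, so the plan is to unwind both sides and exhibit the witnessing subset $Y \subseteq X$ explicitly. Recall that $y \in F_{\min}(X)$ means there is a finite subset $Y \subseteq X$ with $y = f(Y)$, i.e. $y_j = \min_{p \in Y} p_j$ for every coordinate $j$. I would prove the two implications separately, both times reasoning coordinate by coordinate.

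\textbf{From $y \in F_{\min}(X)$ to supportedness.} Suppose $y = f(Y)$ for some finite $Y \subseteq X$. Fix $j$ with $1 \leq j \leq n$. Since $y_j = \min_{p \in Y} p_j$, the minimum is attained by some point $p^{(j)} \in Y$, so $p^{(j)}_j = y_j$ and $p^{(j)}_i \geq y_i$ for all $i$ (the latter because $y_i = \min_{p \in Y} p_i \leq p^{(j)}_i$). The first condition says $p^{(j)} \in y + \ratl_j$ (its $j$-th coordinate equals $y_j$, equivalently the displacement $p^{(j)} - y$ lies in $\ratl_j$), and together with the inequalities $p^{(j)}_i - y_i \geq 0$ it says $p^{(j)} \in (y + \real_+^n) \cap X$, hence $p^{(j)} \in (y + \ratl_j) \cap X$. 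Since $p^{(j)} \in X$ as well, the face intersection $(y + \ratl_j) \cap X$ is nonempty. As $j$ was arbitrary, $y + \real_+^n$ is supported by $X$.

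\textbf{From supportedness to $y \in F_{\min}(X)$.} Conversely, assume $y + \real_+^n$ is supported by $X$. For each $j$ pick a point $p^{(j)} \in (y + \ratl_j) \cap X$; this is possible by hypothesis. Set $Y = \{p^{(1)}, \dots, p^{(n)}\} \subseteq X$, a finite subset (possibly with repetitions, which is harmless). I claim $y = f(Y)$. Membership $p^{(j)} \in y + \ratl_j \subseteq y + \real_+^n$ gives $p^{(j)}_i \geq y_i$ for every $i$, and in particular $p^{(j)}_j = y_j$ since the $j$-th coordinate of any point of $\ratl_j$ is $0$. Therefore, for each coordinate $i$, on one hand $p^{(k)}_i \geq y_i$ for all $k$, so $\min_{k} p^{(k)}_i \geq y_i$; on the other hand $p^{(i)}_i = y_i$, so $\min_k p^{(k)}_i \leq y_i$. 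Hence $\min_k p^{(k)}_i = y_i = f(Y)_i$, and $y = f(Y) \in F_{\min}(X)$.

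Combining the two directions yields the equivalence. There is essentially no obstacle here: the only point requiring a little care is the bookkeeping in the dictionary between ``a point $p$ lies in the anchored face $y + \ratl_j$'' and the two numerical conditions ``$p_j = y_j$'' and ``$p_i \geq y_i$ for all $i$'', and the observation that allowing $Y$ to be a multiset (the same point serving several faces) causes no problem. The duality relation $x \in y + \real_+^n \iff y \in x - \real_+^n$ is not strictly needed for this lemma but matches the symmetric way $f_{\min}$ was set up, and it is the reason the argument reads the same whether one thinks in terms of orthants anchored at $y$ or negative orthants anchored at the $p^{(j)}$.
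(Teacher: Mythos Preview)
Your proof is correct and is precisely the argument the paper sketches in the paragraph immediately preceding the lemma: collect one witness $p^{(j)}$ per face $y+\ratl_j$ to show $y=f(Y)$ for the supportedness-to-$F_{\min}$ direction, and pick a minimizer per coordinate for the converse. The paper treats the lemma as an immediate reformulation and does not write out more than what you have; your only addition is the explicit bookkeeping.
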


\paragraph{Construction.}
We use the Orthant Lemma to construct $F_{\min}(X)$.
We begin by describing the set of points $y$ for which the $j^{\text{th}}$
face of the positive orthant anchored at $y$
has a non-empty intersection with $X$.
Define $F_j = X - \ratl_j$, the set of points of the form $x - z$,
where $x \in X$ and $z \in \ratl_j$.

\begin{lemma}[Face Lemma]
  $(y + \ratl_j) \capsp X \neq \emptyset$ iff $y \in F_j$.
\end{lemma}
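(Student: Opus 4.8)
The Face Lemma states that $(y + \ratl_j) \cap X \neq \emptyset$ if and only if $y \in F_j$, where $F_j = X - \ratl_j$.

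\medskip

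\noindent\textbf{Proof plan.}
The plan is to prove the two directions of the ``iff'' separately by unwinding the definition of the Minkowski difference $F_j = X - \ratl_j = \{x - z \mid x \in X,\ z \in \ratl_j\}$.

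For the forward direction, suppose $(y + \ratl_j) \cap X \neq \emptyset$. Then there is a point $w$ lying both in $X$ and in $y + \ratl_j$, so $w = y + z$ for some $z \in \ratl_j$. Rearranging, $y = w - z$ with $w \in X$ and $z \in \ratl_j$, which is exactly the form of a point in $F_j = X - \ratl_j$; hence $y \in F_j$. For the converse, suppose $y \in F_j$, so $y = x - z$ for some $x \in X$ and $z \in \ratl_j$. Then $x = y + z \in y + \ratl_j$, and since also $x \in X$, the point $x$ witnesses $(y + \ratl_j) \cap X \neq \emptyset$.

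\medskip

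\noindent There is essentially no obstacle here: the statement is a direct translation between ``the shifted face meets $X$'' and ``$y$ is obtained from a point of $X$ by subtracting an element of the face''. The only thing worth remarking is that $\ratl_j$ is closed under nothing special is needed --- the argument is a pure set-membership manipulation that holds for any set $X$ and any set $\ratl_j$. Thus the equivalence follows immediately from the definition of $F_j$ as a Minkowski difference.
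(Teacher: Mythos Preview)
Your proof is correct and follows essentially the same approach as the paper: both arguments unwind the definition of $F_j = X - \ratl_j$, with the paper phrasing the key step as the ``duality relation'' $x \in y + \ratl_j \iff y \in x - \ratl_j$ while you perform the equivalent algebraic rearrangement directly. Your version is in fact slightly more complete, since the paper only writes out the forward direction explicitly.
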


\begin{proof}
 Let $x \in X$ be a point in the intersection, that is,
 $x \in y + \ratl_j$.
 Using the duality relation for the $(n-1)$-dimensional orthant,
 we get $y \in x - \ratl_j$.
 By definition, $x - \ratl_j$ is a subset of $X - \ratl_j$,
 and hence $y \in F_j$.
\qed
\end{proof}

It is now easy to describe the set defined in our problem statement.

\begin{lemma}[Characterization]
  $F_{\min}(X) = \bigcap_{j=1}^n F_j$.
\end{lemma}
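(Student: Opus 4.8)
The plan is to combine the Orthant Lemma with the Face Lemma directly. Recall that by the Orthant Lemma, $y \in F_{\min}(X)$ if and only if $y + \real_+^n$ is supported by $X$, which by definition means $(y + \ratl_j) \capsp X \neq \emptyset$ for every $1 \leq j \leq n$. First I would rewrite this conjunction: $y + \real_+^n$ is supported by $X$ iff for all $j \in \{1, \ldots, n\}$ we have $(y + \ratl_j) \capsp X \neq \emptyset$.

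Next I would apply the Face Lemma to each conjunct: for a fixed $j$, the condition $(y + \ratl_j) \capsp X \neq \emptyset$ is equivalent to $y \in F_j$. Substituting this equivalence termwise, the statement ``$(y + \ratl_j) \capsp X \neq \emptyset$ for all $j$'' becomes ``$y \in F_j$ for all $j$'', which is precisely $y \in \bigcap_{j=1}^n F_j$. Chaining the two equivalences gives $y \in F_{\min}(X)$ iff $y \in \bigcap_{j=1}^n F_j$, and since this holds for every $y$, we conclude $F_{\min}(X) = \bigcap_{j=1}^n F_j$.

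There is essentially no obstacle here: the lemma is a purely formal consequence of the two preceding lemmas, so the only care needed is to make sure the universal quantifier over $j$ distributes correctly over the intersection (which it does, by the definition of intersection). If one wanted to be fully self-contained without invoking the Orthant Lemma as a black box, the forward direction would unwind the definition of $F_j = X - \ratl_j$ to exhibit, for each $j$, a witness point $x_j \in X$ with $y \in x_j - \ratl_j$, i.e. $x_j \in y + \ratl_j \subseteq y + \real_+^n$; collecting these $x_j$ gives a finite set $Y \subseteq X$ with $f_{\min}(Y) = y$ (the $j$-th coordinate of $y$ is matched by $x_j$ and no coordinate of any $x_j$ lies below the corresponding coordinate of $y$), hence $y \in F_{\min}(X)$. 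The reverse direction is symmetric. But given the lemmas already proved, the two-line argument above suffices.
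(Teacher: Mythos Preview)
Your proposal is correct and follows exactly the same route as the paper's own proof: invoke the Orthant Lemma to rewrite $y\in F_{\min}(X)$ as the conjunction of the face conditions, then apply the Face Lemma termwise to identify that conjunction with $y\in\bigcap_{j=1}^n F_j$. The extra self-contained unwinding you sketch is fine but unnecessary here.
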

\begin{proof}
 By the Orthant Lemma, $y \in F_{\min}(X)$ iff $y + \real_+^n$ is supported by $X$.
 Equivalently, $(y + \ratl_j) \capsp X \neq \emptyset$
 for all $1 \leq j \leq n$.
 By the Face Lemma, this is equivalent to $y$ belonging to the common
 intersection of the sets 
 $F_j = X - \ratl_j$.
\qed
\end{proof}

\noindent{\bf Algorithm for computation of $F_{\min}(\conv(S))$.}
Following the construction, we get an algorithm that computes 
$F_{\min}(\conv(S))$ for a finite set $S$ of points in $\real^n$. 
Let $|S|=m$. 
We first represent $X=\conv(S)$ as intersection of half-spaces:
we require at most $m^n$ half-spaces (linear constraints).
It follows that $F_j=X-\ratl_j$ can be expressed as $m^n$ 
linear constraints, and hence $F_{\min}(X)=\bigcap_{j=1}^n F_j$ can be expressed
as $n\cdot m^n$ linear constraints. 
This gives us the following result.



\begin{theorem}
Given a finite set $S$ of $m$ points in $\real^n$, we can construct in 
$O(n \cdot m^n)$ time $n \cdot m^n$ linear constraints that represent 
$F_{\min}(\conv(S))$. 
\end{theorem}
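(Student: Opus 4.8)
The plan is to turn the Characterization Lemma, $F_{\min}(\conv(S)) = \bigcap_{j=1}^n F_j$ with $F_j = \conv(S) - \ratl_j$, into an explicit count of linear constraints, tracking the sizes at each stage of the pipeline $S \leadsto \conv(S) \leadsto F_j \leadsto \bigcap_j F_j$.

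First I would express $X = \conv(S)$ as an intersection of half-spaces. Since $S$ has $m$ points in $\real^n$, every facet of $\conv(S)$ is the affine hull of (at most) $n$ of these points, so there are at most $\binom{m}{n} \le m^n$ facets; enumerating all $n$-subsets of $S$, computing the supporting hyperplane of each, and checking on which side the remaining points lie yields these half-spaces in $O(n \cdot m^n)$ arithmetic steps (treating each $n \times n$ determinant as a unit). So $X$ is cut out by at most $m^n$ linear constraints. Next, for each coordinate $j$, I would observe that $F_j = X - \ratl_j$ is the Minkowski sum of $X$ with the ray-cone $-\ratl_j = \{z : z_i \le 0 \ \forall i \neq j,\ z_j = 0\}$; geometrically this ``sweeps'' $X$ in the negative direction of every axis except the $j$th. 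The key point is that adding a recession direction $-e_i$ to a polyhedron simply deletes every facet whose outer normal has a strictly positive $i$th component and keeps the rest unchanged; hence $F_j$ is described by the subset of the $\le m^n$ half-spaces of $X$ whose normals are non-positive in all coordinates other than $j$. In particular $F_j$ needs at most $m^n$ linear constraints, computable by simply scanning the constraint list of $X$. Finally, $F_{\min}(\conv(S)) = \bigcap_{j=1}^n F_j$ is the conjunction of these $n$ constraint systems, giving at most $n \cdot m^n$ linear constraints in total, assembled in $O(n \cdot m^n)$ time.

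The main obstacle — or rather the point needing the most care — is the claim about how Minkowski-summing with the cone $-\ratl_j$ transforms the $H$-representation: one must argue that no new facets are created and that the surviving half-spaces, intersected, give exactly $F_j$ rather than a proper superset or subset. I would justify this via support functions: $h_{F_j}(u) = h_X(u) + h_{-\ratl_j}(u)$, which is $h_X(u)$ when $u$ lies in the dual cone (all non-$j$ components $\ge 0$, $j$th component free... with $u_j \ge 0$ as well up to the equality direction) and $+\infty$ otherwise; thus the only finite supporting half-spaces of $F_j$ are those of $X$ with normals in that dual cone, which is precisely the description above. An alternative, cleaner for the write-up, is to note $F_j$ is the projection argument: $y \in F_j$ iff $\exists x \in X$ with $x_i \ge y_i$ for $i \neq j$ and $x_j = y_j$, and then eliminate $x$ from the $m^n$ inequalities of $X$ together with the $n{-}1$ inequalities $x_i \ge y_i$ by the standard observation that in the direction of a recession cone Fourier–Motzkin does not blow up the facet count. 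Either way the bound $n \cdot m^n$ and the running time $O(n \cdot m^n)$ follow, establishing the theorem.
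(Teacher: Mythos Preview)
Your approach is essentially the paper's own: bound the facets of $\conv(S)$ by $m^n$, argue that each $F_j = X - \ratl_j$ is described by a subset of those half-spaces, and intersect over $j$ to get $n\cdot m^n$ constraints; in fact you supply more justification than the paper, which simply asserts the bounds. One small sign slip: the half-spaces that \emph{survive} the Minkowski sum with $-\ratl_j$ are those whose outer normals are non\emph{negative} in all coordinates $i\neq j$ (a constraint $\langle a,x\rangle \le b$ admits recession direction $-e_i$ iff $a_i \ge 0$), so your ``non-positive'' should read ``non-negative'' --- the support-function computation you sketch gives exactly this once the sign is fixed.
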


\section{Mean-Payoff Automaton Expressions are Decidable}\label{sec:decidable}

Several problems on quantitative languages can be solved for the class
of mean-payoff automaton expressions using the vector set. 
The decision problems of quantitative emptiness and universality, and quantitative language inclusion and equivalence
are all decidable, as well as questions related to cut-point languages, and computing 
distance between mean-payoff languages. 

\paragraph{Decision problems and distance.}
From the vector set $V_E = \{\tuple{L_{A_1}(w),\ldots,L_{A_n}(w)} \in \real^n \mid w \in \Sigma^{\omega}\}$,
we can compute the \emph{value set} $L_E(\Sigma^{\omega}) = \{L_E(w) \mid w \in \Sigma^{\omega}\}$ of values
of words according to the quantitative language of $E$ as follows. The set $L_E(\Sigma^{\omega})$ is obtained by successive application
of \emph{min-}, \emph{max-} and \emph{sum-projections} $p^{\min}_{ij}, p^{\max}_{ij}, p^{{\Sum}}_{ij}: \real^k \to \real^{k-1}$ where $i < j \leq k$, 
defined by 
$$\begin{array}{ll}
p^{\min}_{ij}((x_1,\dots, x_k)) = & (x_1, \dots, x_{i-1}, \min(x_i,x_j), x_{i+1}, \dots, x_{j-1}, x_{j+1}, \dots x_k), \\[+3pt] 
p^{{\Sum}}_{ij}((x_1,\dots, x_k)) = & (x_1, \dots, x_{i-1}, \quad\! x_i + x_j \quad\!, x_{i+1}, \dots, x_{j-1}, x_{j+1}, \dots x_k), \\ 
\end{array}
$$
and analogously for $p^{\max}_{ij}$. For example, $p^{\max}_{12}(p^{\min}_{23}(V_E))$ gives the set $L_E(\Sigma^{\omega})$ of word values of the mean-payoff automaton expression $E = \max(A_1,\min(A_2,A_3))$.

Assuming a representation of the polytopes of $V_E$ as a boolean combination $\varphi_E$ of linear constraints, 
the projection  $p^{\min}_{ij}(V_E)$ is represented by the formula 
$$\psi = (\exists x_j: \varphi_E \land x_i \leq x_j) \lor (\exists x_i: \varphi_E \land x_j \leq x_i)[x_j \gets x_i]$$
where $[x \gets e]$ is a substitution that replaces every occurrence of $x$ by the expression $e$.
Since linear constraints over the reals admit effective elimination of existential quantification, the formula $\psi$ can be transformed
into an equivalent boolean combination of linear constraints without existential quantification.
The same applies to max- and sum-projections.

Successive applications of min-, max- and sum-projections (following the structure of the 
mean-payoff automaton expression $E$)
gives the value set $L_E(\Sigma^{\omega}) \subseteq \real$ as a boolean combination
of linear constraints, hence it is a union of intervals. From this set, it is easy to decide the 
quantitative emptiness problem and the quantitative universality problem: there exists a word $w \in \Sigma^{\omega}$ such that $L_E(w) \geq \nu$
if and only if $L_E(\Sigma^{\omega}) \cap \, [\nu, +\infty [ \,\neq \emptyset$, and $L_E(w) \geq \nu$ for all words $w \in \Sigma^{\omega}$ 
if and only if $L_E(\Sigma^{\omega}) \,\cap \, ] -\infty, \nu [ \,= \emptyset$.

In the same way, we can decide the quantitative language inclusion problem ``is $L_E(w) \leq L_F(w)$ for all words $w \in \Sigma^{\omega}$ ?'' 
by a reduction to the universality problem for the expression $F-E$ and threshold $0$ 
since mean-payoff automaton expressions are closed under sum and complement. 
The quantitative language equivalence problem is then obviously also decidable.

Finally, the distance between the quantitative languages of $E$ and $F$
can be computed as the largest number (in absolute value) in the value set of $F-E$.
As a corollary, this distance is always a rational number.


\paragraph{Comparison with~\cite{AlurDMW09}.}
The work in~\cite{AlurDMW09} considers deterministic mean-payoff automata with multiple payoffs.
The weight function in such an automaton is of the form $\weight: \delta \to \rat^d$. The value of a finite
sequence $(v_i)_{1 \leq i \leq n}$ (where $v_i \in \rat^d$) is the mean of the tuples $v_i$, that is a $d$-dimensional vector $\Avg_n = \frac{1}{n} \cdot \sum_{i=0}^{n-1} v_i$.
The ``value'' associated to an infinite run (and thus also to the corresponding word, since the automaton is deterministic) is the
set $Acc \subseteq \real^d$ of accumulation points of the sequence $(\Avg_n)_{n \geq 1}$. 


In~\cite{AlurDMW09}, a query language on the set of accumulation points is used to define \emph{multi-threshold mean-payoff languages}. 
For $1 \leq i \leq n$, let $p_i: \real^n \to \real$ be the usual projection along the $i^{\text{th}}$ coordinate.
A query is a boolean combination of atomic threshold conditions of the form $\min(p_i(Acc)) \sim \nu$ or $\max(p_i(Acc)) \sim \nu$ where $\sim \in \{<,\leq,\geq,>\}$
and $\nu \in \rat$. A word is accepted if the set of accumulation points of its (unique) run satisfies the query. 
Emptiness is decidable for such multi-threshold mean-payoff languages, by an argument based on the computation
of the convex hull of the vector values of the simple cycles in the automaton~\cite{AlurDMW09} (see also Lemma~\ref{lem:convex-hull-lasso-words}). 
We have shown that this convex hull $\conv(S_E)$ is not sufficient to analyze quantitative languages of mean-payoff automaton expressions. 
It turns out that a richer query language can also be defined using our construction of $F_{\min}(\conv(S_E))$.

In our setting, we can view a $d$-dimensional mean-payoff automaton $A$ as a product $P_A$ of 2d copies $A^i_{t}$ of $A$ (where $1 \leq i \leq d$ and $t \in \{\LimInfAvg, \LimSupAvg\}$), 
where $A^i_{t}$ assigns to each transition the $i^{\text{th}}$  coordinate of the payoff vector in $A$, and the automaton is interpreted as a $t$-automaton. 
Intuitively, the set $Acc$ of accumulation points of a word $w$ satisfies $\min(p_i(Acc)) \sim \nu$ (resp. $\max(p_i(Acc) \sim \nu$) if and only if the value of $w$ according to the
automaton $A^i_{t}$ for $t = \LimInfAvg$ (resp. $t = \LimSupAvg$) is $\sim \nu$. Therefore, atomic threshold conditions can be encoded
as threshold conditions on single variables of the vector set for $P_A$. 
Therefore, the vector set computed in Section~\ref{sec:alg-cons} allows to decide
the emptiness problem for multi-threshold mean-payoff languages, by checking emptiness of the intersection 
of the vector set with the constraint corresponding to the query. 

Furthermore, we can solve more expressive queries in our framework, namely where atomic conditions are
linear constraints on $\LimInfAvg$- and $\LimSupAvg$-values.
For example, the constraint $\LimInfAvg(\weight_1) + \LimSupAvg(\weight_2) \sim \nu$
is simply encoded as $x_k + x_l \sim \nu$ where $k,l$ are the indices corresponding to $A^1_{\LimInfAvg}$ and $A^2_{\LimSupAvg}$
respectively. 
Note that the trick of extending the dimension of the $d$-payoff vector with, say $\weight_{d+1} = \weight_1 + \weight_2$,
is not equivalent because 
${\sf Lim}\raisebox{2.0pt}{\scalebox{0.45}{\Big\{\begin{tabular}{c}{\sf Sup}\\[-3pt]{\sf Inf}\end{tabular}\Big\}}}{\sf Avg}(\weight_1) \pm
{\sf Lim}\raisebox{2.0pt}{\scalebox{0.45}{\Big\{\begin{tabular}{c}{\sf Sup}\\[-3pt]{\sf Inf}\end{tabular}\Big\}}}{\sf Avg}(\weight_2)$
is not equal to ${\sf Lim}\raisebox{2.0pt}{\scalebox{0.45}{\Big\{\begin{tabular}{c}{\sf Sup}\\[-3pt]{\sf Inf}\end{tabular}\Big\}}}{\sf Avg}(\weight_1 \pm \weight_2)$
in general (no matter the choice of $\raisebox{2.0pt}{\scalebox{0.45}{\Big\{\begin{tabular}{c}{\sf Sup}\\[-3pt]{\sf Inf}\end{tabular}\Big\}}}$ and $\pm$).
Hence, in the context of non-quantitative languages our results also provide a richer query language for the deterministic mean-payoff automata with multiple payoffs.

\paragraph{Complexity.}
All problems studied in this section can be solved easily (in polynomial time) once 
the value set is constructed, which can be done in quadruple exponential time.
The quadruple exponential blow-up is caused by $(a)$ the synchronized product 
construction for $E$, $(b)$ the computation of the vector values of all simple cycles in $A_E$,
$(c)$ the construction of the vector set $F_{\min}(\conv(S_E))$, and $(d)$
the successive projections of the vector set to obtain the value set.
Therefore, all the above problems can be solved in 4EXPTIME. 

\begin{theorem}\label{thrm_decidable}
For the class of mean-payoff automaton expressions, the quantitative emptiness,
universality, language inclusion, and equivalence problems, as well as distance
computation can be solved in 4EXPTIME.
\end{theorem}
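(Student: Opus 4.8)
The plan is to assemble the ingredients of Sections~\ref{sec:vector-set}--\ref{sec:decidable} and to track the size blow-up at each of four stages. Fix a mean-payoff automaton expression $E$ of size $N$, built from deterministic automata $A_1,\dots,A_n$ (so $n \le N$ and each $\abs{Q_i} \le N$); by the duality of $\liminf$ and $\limsup$ noted after Lemma~\ref{lem:value-set} we may assume every $A_i$ is a $\LimInfAvg$-automaton. \emph{Stage (a):} form the synchronized product $A_E = A_1 \times \cdots \times A_n$, which has $\prod_i \abs{Q_i} \le 2^{O(N^2)}$ states, and decompose it into its strongly connected components $\Z$; by Theorem~\ref{theo:value-set-general} it then suffices to compute $F_{\min}(\conv(S_\C))$ for each $\C \in \Z$ and take the union. \emph{Stage (b):} enumerate the simple cycles of each $\C$; since a graph on $s$ vertices may have $2^{\Theta(s)}$ simple cycles, the finite sets $S_\C$, and hence $S_E$, have size $m \le 2^{2^{O(N)}}$, each vector value being an $n$-tuple of rationals of polynomially bounded bit-length obtained by averaging transition weights along the cycle.

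\emph{Stage (c):} compute $F_{\min}(\conv(S_\C))$ using the geometric construction of Section~\ref{subsec:geometric}: represent $\conv(S_\C)$ by at most $m^n$ half-spaces, and by the Characterization Lemma obtain $F_{\min}(\conv(S_\C)) = \bigcap_{j=1}^n (\conv(S_\C) - \ratl_j)$ as a conjunction of $n\cdot m^n$ linear constraints; the union over $\C \in \Z$ is a boolean combination $\varphi_E$ of linear constraints describing $V_E$, of size triply exponential in $N$ (alternatively, Theorem~\ref{theo:explicit-construction} gives a vertex description $\conv(F_{\min}(S'))$ with $\abs{S'} \le m^{2^n}\cdot 2^{n^2+n}$). \emph{Stage (d):} project $V_E$ down to the value set $L_E(\Sigma^\omega) \subseteq \real$ by applying, following the structure of $E$, the $n-1$ min-, max- and sum-projections $p^{\min}_{ij}, p^{\max}_{ij}, p^{\Sum}_{ij}$ of Section~\ref{sec:decidable}; each is realized by the formula transformation $\psi$ given there, which invokes quantifier elimination for linear arithmetic over $\real$, and each of the $n-1 = O(N)$ elimination steps can at most raise the number of constraints to a fixed power, contributing one further exponential. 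Thus $L_E(\Sigma^\omega)$ is obtained, in $4$EXPTIME, as a finite union of intervals with rational endpoints. From it, quantitative emptiness is $L_E(\Sigma^\omega) \cap [\nu,+\infty[\, \neq \emptyset$ and universality is $L_E(\Sigma^\omega) \cap\, ]-\infty,\nu[\, = \emptyset$, both polynomial in the interval representation; language inclusion $L_E \preceq L_F$ reduces, using closure under $\Sum$ and complement, to universality of $F - E$ at threshold $0$; equivalence is inclusion in both directions; and $D_{\sup}(L_E,L_F)$ is the largest absolute value occurring in the value set of $F-E$, likewise read off in polynomial time (and hence rational).

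The content of the argument is almost entirely in the cited results; the main obstacle is the bookkeeping, namely verifying that the four blow-ups compose to \emph{exactly} a quadruple exponential rather than collapsing or overflowing. Concretely, one must check that each stage is at most exponential in the output size of the previous one --- in particular that eliminating one existential quantifier from a boolean combination of $p$ linear constraints produces $p^{O(1)}$ constraints, so that the $n-1$ iterations of Stage (d) together cost only one exponential --- and that the constants in the $F_{\min}$ construction (the bound $m^{2^n}\cdot 2^{n^2+n}$ of Theorem~\ref{theo:explicit-construction}, or the $n\cdot m^n$ constraints of the geometric construction) do not push the tower higher. Since $n$ and each $\abs{Q_i}$ are $O(N)$, bounding everything uniformly in $N$ then yields the stated $4$EXPTIME bound, and polynomial-time post-processing of the value set gives all five problems.
\qed
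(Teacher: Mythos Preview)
Your proposal is correct and takes essentially the same approach as the paper: the paper's own argument is the brief ``Complexity'' paragraph immediately preceding the theorem, which identifies exactly the same four sources of exponential blow-up---(a) the synchronized product $A_E$, (b) enumeration of the vector values of simple cycles, (c) the construction of $F_{\min}(\conv(S_E))$, and (d) the successive $\min$/$\max$/$\Sum$ projections---and then reads off emptiness, universality, inclusion, equivalence, and distance from the resulting value set just as you do. Your plan is simply a more explicit version of that paragraph, tracking the size parameters through each stage where the paper merely asserts that each stage costs one exponential.
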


Theorem~\ref{thrm_decidable} is in sharp contrast with the nondeterministic 
and alternating mean-payoff automata for which language inclusion is 
undecidable (see also Table~\ref{tab:properties}). 
The following theorem presents the undecidability result that is derived from 
the results of~\cite{DDGRT10}. 

\begin{theorem}\label{thrm_undecidable}
The quantitative universality, 
language inclusion, and 
language equivalence problems are undecidable 
for nondeterministic mean-payoff automata; and 
the quantitative emptiness, 
universality, 
language inclusion, and 
language equivalence problems are undecidable for 
alternating mean-payoff automata.
\end{theorem}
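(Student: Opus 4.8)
The plan is to reduce from the undecidability of mean-payoff games with imperfect information established in~\cite{DDGRT10}, following the standard correspondence between such games and alternating mean-payoff automata. First I would recall the precise statement imported from~\cite{DDGRT10}: deciding whether the first player has a winning strategy that guarantees a mean-payoff value above a given threshold in a two-player game with imperfect information (where the second player's moves are hidden) is undecidable. The classical dictionary between games and automata identifies the existential (word-selecting) player with nondeterminism and the universal player with alternation; under imperfect information, the hidden choices of the adversary correspond exactly to the universal branching of an alternating automaton, while the observable actions correspond to the letters of the input word. Thus a winning strategy in the imperfect-information game translates into a word whose value under the alternating mean-payoff automaton meets the threshold, and conversely.

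The key steps, in order, are: (1) from an instance of the imperfect-information mean-payoff game, build an alternating mean-payoff automaton $A$ over an alphabet $\Sigma$ encoding the observable moves, such that for every word $w$, the value $L_A(w)$ equals the value the universal player can secure in the game when the play follows the observation sequence $w$; the weights on transitions are inherited from the payoff structure of the game. (2) Observe that ``there exists a word $w$ with $L_A(w) \geq \nu$'' is then precisely the statement that the existential (imperfect-information) player wins, so \emph{quantitative emptiness is undecidable for alternating mean-payoff automata}. (3) For universality, inclusion and equivalence of alternating automata, use the fact that alternating mean-payoff automata are closed under complementation (the footnote in the excerpt: $\LimInfAvg$ and $\LimSupAvg$ automata are dual), so emptiness reduces to universality of the complement automaton, and universality is a special case of inclusion, which in turn is a special case of equivalence; hence all four problems are undecidable for alternating automata. (4) For the nondeterministic case, the argument for universality (hence inclusion and equivalence) is obtained by taking the \emph{complement} of the alternating automaton produced in step (1) and noting that, in the constructions of~\cite{DDGRT10}, the relevant automaton is in fact a complemented nondeterministic one — equivalently, one reduces directly from the undecidability of checking whether a nondeterministic mean-payoff automaton has value $\leq \nu$ on \emph{all} words, which is exactly quantitative universality; inclusion $L_1 \preceq L_2$ then subsumes this by taking $L_1$ constant, and equivalence subsumes inclusion. (Note that quantitative emptiness for nondeterministic mean-payoff automata \emph{is} decidable~\cite{CDH08}, so only the other three problems are claimed undecidable in that row.)

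The main obstacle I expect is step (1): making the translation between the imperfect-information game of~\cite{DDGRT10} and an alternating mean-payoff automaton faithful at the level of \emph{quantitative} values rather than merely at the level of a boolean winning condition. One must be careful that the $\liminf$/$\limsup$ averaging in the automaton semantics matches the mean-payoff objective of the game (this is where the $\LimInfAvg$ vs.\ $\LimSupAvg$ distinction and the duality footnote become essential), that the totality requirement on the transition relation $\delta$ is respected, and that the information-hiding of the game is encoded by universal branching over exactly the indistinguishable successor states. Once the value-preserving translation is in place, steps (2)--(4) are routine consequences of closure under complement and the trivial reductions among emptiness, universality, inclusion, and equivalence.
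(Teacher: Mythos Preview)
Your proposal is correct and rests on the same reduction from~\cite{DDGRT10} that the paper uses, but the paper organizes the argument in the reverse order and in a cleaner way. The paper does \emph{not} first build an alternating automaton and then dualize; instead it observes that the games in~\cite{DDGRT10} are \emph{blind} (player~1 sees nothing at all), so a player~1 strategy is literally an infinite word, and player~2's perfect-information choices are exactly the resolution of nondeterminism. From the undecidability of ``$\exists\sigma.\ \inf_{\tau} P(\sigma,\tau)>\nu$'' the paper passes by negation to ``$\forall\sigma.\ \sup_{\tau} P(\sigma,\tau)\geq\nu$'' and reads this off directly as quantitative universality of a \emph{nondeterministic} automaton; the remaining items (inclusion, equivalence for nondeterministic; and all four problems for alternating) then follow by the trivial reductions you list in steps~(3)--(4). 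Two points you gloss over that the paper makes explicit: (i)~the word--strategy correspondence needs blindness, not merely partial observation, so your step~(1) should be stated for the blind case; (ii)~the passage between the strict threshold in ``$\inf_{\tau}>\nu$'' and the non-strict ``$\geq\nu$'' of the universality problem is justified in the paper by inspecting the two-counter-machine reduction of~\cite{DDGRT10} and noting a dichotomy (either player~1 can ensure $>0$ or player~2 can ensure $\leq 0$). Your step~(4) also has a sign slip: universality is ``$L(w)\geq\nu$ for all $w$'', not ``$\leq\nu$''; the paper avoids this bookkeeping by working with the dual formulation from the start rather than complementing an alternating automaton.
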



\section{Expressive Power and Cut-point Languages}\label{sec:expressive-power}

We study the expressive power of mean-payoff automaton expressions $(i)$ according
to the class of quantitative languages that they define, and  $(ii)$ according
to their cut-point languages.

\paragraph{Expressive power comparison.}
We compare the expressive power of 
mean-payoff automaton expressions with nondeterministic and alternating mean-payoff automata.
The results of~\cite{CDH09b} show that there exist deterministic mean-payoff automata 
$A_1$ and $A_2$ such that $\min(A_1,A_2)$ cannot be expressed by nondeterministic mean-payoff 
automata.
The results of~\cite{CDH-FCT09} shows that there exists deterministic mean-payoff 
automata $A_1$ and $A_2$ such that $\Sum(A_1,A_2)$ cannot be expressed by alternating mean-payoff 
automata.
It follows that there exist languages expressible by mean-payoff automaton expression that  
cannot be expressed by nondeterministic and alternating mean-payoff automata.
In Theorem~\ref{thrm_expressive_power} we show the converse, that is, 
we show that there exist languages expressible by nondeterministic 
mean-payoff automata that cannot be expressed by mean-payoff automaton expression.
It may be noted that the subclass of mean-payoff automaton expressions that 
only uses min and max operators (and no sum operator) is a strict subclass of 
alternating mean-payoff automata, and when only the max operator is used we get
a strict subclass of the nondeterministic mean-payoff automata.


\begin{theorem}\label{thrm_expressive_power}
Mean-payoff automaton expressions are incomparable in expressive power with 
nondeterministic and alternating mean-payoff automata: (a)~there exists a quantitative 
language that is expressible by mean-payoff automaton expressions, but cannot be expressed by 
alternating mean-payoff automata; and 
(b)~there exists a quantitative language that is expressible by a nondeterministic 
mean-payoff automaton, but cannot be expressed by a mean-payoff automaton expression.
\end{theorem}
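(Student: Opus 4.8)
Part~(a) follows almost immediately from known results together with the definition of expressions. By~\cite{CDH-FCT09} there are deterministic mean-payoff automata $A_1,A_2$ such that the quantitative language $\Sum(A_1,A_2)$ is not expressible by any alternating mean-payoff automaton; but $\Sum(A_1,A_2)$ is by definition a mean-payoff automaton expression, so it witnesses~(a). Since alternating automata subsume nondeterministic ones, the same language also separates expressions from nondeterministic automata; alternatively one can use the language $\min(A_1,A_2)$ of~\cite{CDH09b} there.

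Part~(b) is the substantial direction, and this is where I would spend the effort. First I would fix an explicit nondeterministic $\LimInfAvg$-automaton $N$ over a small alphabet whose nondeterminism is \emph{essential}: it is built so that an optimal run over a word $w$ has to react to arbitrarily long ``blocks'' of $w$, making $L_N$, as a function of $w$, alternate between a cheap ``pure-mode'' regime (where the value is a plain $\liminf$-average of some bit) and a ``guessing'' regime (where the value jumps because a run can lock onto a long block). The plan is then to show $L_N\neq L_E$ for every mean-payoff automaton expression $E$, using the characterization of the vector set (Theorem~\ref{theo:value-set-general}) together with the fact that $L_E$ is obtained from the vector set by a \emph{fixed} finite sequence of min-, max- and sum-projections (Section~\ref{sec:decidable}): concretely $L_E(w) = \phi(L_{A_1}(w),\dots,L_{A_n}(w))$ for a fixed $\min/\max/\Sum$ combination $\phi$ of finitely many deterministic $\LimInfAvg$- and $\LimSupAvg$-values, each of which is a memory-bounded quantity. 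From this rigid form one extracts a structural invariant satisfied by every $L_E$ --- for instance that, restricted to a suitable $\omega$-regular family of words, $L_E$ coincides with a fixed finite $\min/\max/\Sum$-combination of finitely many ``block-insensitive'' averages --- and then one exhibits an explicit family of near-lasso words on which $L_N$ violates this invariant: reproducing $L_N$ on that family would force the finitely many deterministic automata of $E$ to jointly track an unbounded amount of information about the current block, which their bounded memory cannot do.

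The crux, and the step I expect to be the main obstacle, is isolating that invariant: it must be genuinely forced for the \emph{whole} class of expressions (closed under $\max$, $\min$ and $\Sum$ of arbitrarily many deterministic automata over arbitrarily large alphabets) while still being violated by some nondeterministic automaton. Naive separating properties fail, since the value sets of nondeterministic automata and of expressions are both finite unions of rational intervals, both classes realize irrational word-values, and neither admits $\omega$-regular cut-point languages at arbitrary thresholds; hence the invariant has to exploit the rigid shape $\phi(L_{A_1},\dots,L_{A_n})$ with a fixed $\phi$ and memory-bounded components rather than any property of the value set alone. Once the invariant and the witness $N$ are pinned down, the remaining work --- evaluating $L_N$ on the chosen word family and checking that no choice of deterministic automata and combination $\phi$ can match it --- is routine though slightly delicate.
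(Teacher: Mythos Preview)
Your treatment of part~(a) is exactly the paper's argument: the $\Sum(A_1,A_2)$ witness from~\cite{CDH-FCT09} is a mean-payoff automaton expression that no alternating automaton can express.

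For part~(b), however, your proposal is not yet a proof: you explicitly leave open the ``main obstacle,'' namely isolating an invariant that every expression must satisfy but some nondeterministic automaton violates. The paper closes this gap with a much more concrete and elementary argument than the vector-set machinery you invoke. The witness is the language $L_F$ over $\{a,b\}$ with $L_F(w)=1$ if $w$ has finitely many $a$'s and $L_F(w)=0$ otherwise; a three-state nondeterministic $\LimInfAvg$-automaton defines it by guessing the position of the last $a$. The invariant you were looking for is simply this: for \emph{every} expression $E$, one can find a word $w$ with infinitely many $a$'s and a word $w'=w_1 b^{\omega}$ such that $L_E(w)=L_E(w')$. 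To see it, take the synchronized product $A_E$, pick a reachable bottom s.c.c.\ and a $b$-cycle $C$ in it, and build $w$ by reaching $C$ via $w_1$, then alternately looping $k$ times through $C$ and inserting a single $a$ (plus at most $|A_E|$ letters to return to $C$), with $k$ increasing unboundedly. In each deterministic component $A_i$ the long $b$-cycle dominates, so $L_{A_i}(w)$ equals the average weight along $C$, which is also $L_{A_i}(w_1 b^{\omega})$; since $L_E$ is a fixed function of the tuple $(L_{A_1},\dots,L_{A_n})$, we get $L_E(w)=L_E(w_1 b^{\omega})$. But $L_F(w)=0\neq 1=L_F(w_1 b^{\omega})$, contradiction.

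So the ``rigid shape $\phi(L_{A_1},\dots,L_{A_n})$'' observation you made is the right one, but you do not need Theorem~\ref{theo:value-set-general} or any global structure of the value set; you only need that two words with identical component values have identical $E$-value, together with one carefully chosen pair of words.
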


\paragraph{Cut-point languages.}
Let $L$ be a quantitative language over $\Sigma$. 
Given a threshold $\eta \in \real$, the \emph{cut-point language} defined by $(L,\eta)$
is the language (i.e., the set of words) $L^{\geq \eta} = \{w \in \Sigma^{\omega} \mid L(w) \geq \eta \}$.
It is known for deterministic mean-payoff automata that the cut-point language
may not be $\omega$-regular, while it is $\omega$-regular if the threshold $\eta$ 
is \emph{isolated}, i.e. if there exists $\epsilon > 0$ such that $\abs{L(w) - \eta} > \epsilon$ for
all words $w \in \Sigma^{\omega}$~\cite{CDH09b}. 

We present the following results
about cut-point languages of mean-payoff automaton expressions.
First, we note that it is decidable whether a rational threshold $\eta$ is 
an isolated cut-point of a mean-payoff automaton expression, using the value set
(it suffices to check that $\eta$ is not in the value set since this set is closed).
Second, isolated cut-point languages of mean-payoff automaton expressions are \emph{robust}
as they remain unchanged under sufficiently small perturbations of the transition
weights. This result follows from a more general robustness property of weighted
automata~\cite{CDH09b} that extends to mean-payoff automaton expressions: if the
weights in the automata occurring in $E$ are changed by at most $\epsilon$, 
then the value of every word changes by at most $\max(k,1) \cdot \epsilon$ where
$k$ is the number of occurrences of the $\Sum$ operator in $E$.
Therefore $D_{\sup}(L_E,L_{F^\epsilon}) \to 0$ when $\epsilon \to 0$ where  $F^\epsilon$ is any mean-payoff automaton expression
obtained from $E$ by changing the weights by at most $\epsilon$. 
As a consequence, isolated cut-point languages of mean-payoff automaton expressions are robust.
Third, the isolated cut-point language of mean-payoff automaton expressions is 
$\omega$-regular. To see this, note that every strongly connected component
of the product automaton $A_E$ contributes with a closed convex set to the value 
set of $E$. Since the $\max$-, $\min$- and $\Sum$-projections are continuous
functions, they preserve connectedness of sets and therefore each scc $C$ contributes with
an interval $[m_C,M_C]$ to the value set of $E$. An isolated cut-point $\eta$ cannot belong to any of these
intervals, and therefore we obtain a B\"uchi-automaton for the cut-point language
by declaring to be accepting the states of the product automaton $A_E$ that belong
to an scc $C$ such that $m_C > \eta$. Hence, we get the following result.

\begin{theorem}\label{theo:cut-point}
Let $L$ be the quantitative language of a mean-payoff automaton expression.
If $\eta$ is an isolated cut-point of $L$, then the cut-point language $L^{\geq \eta}$
is $\omega$-regular.
\end{theorem}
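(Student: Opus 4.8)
The plan is to produce an explicit B\"uchi automaton for $L^{\geq\eta}$ by equipping the deterministic product automaton $A_E$ with an acceptance condition that only depends on which strongly connected component a run eventually settles in. Throughout I write $L = L_E$ and let $\Z$ be the set of scc's of $A_E$.

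First I would establish that the value set is a finite union of closed bounded intervals, one per scc. By Theorem~\ref{theo:value-set-general} the vector set decomposes as $V_E = \bigcup_{\C\in\Z} F_{\min}(\conv(S_\C))$. Each piece $F_{\min}(\conv(S_\C))$ is convex by Lemma~\ref{lem:value-set-convex}, and it is also compact: $\conv(S_\C)$ is the convex hull of a finite set, hence compact, and by Lemma~\ref{lemm1} we have $F_{\min}(\conv(S_\C)) = F_n(\conv(S_\C))$, which is the image of the compact set $\conv(S_\C)^n$ under the continuous coordinatewise-minimum map, hence compact. The value set $L(\Sigma^\omega)$ is obtained from $V_E$ by a fixed finite composition $\pi$ of min-, max-, and sum-projections determined by the syntax tree of $E$; each projection is continuous, so $\pi$ is continuous. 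A continuous image of a compact connected set is compact and connected, so $\pi(F_{\min}(\conv(S_\C))) = [m_\C, M_\C]$ is a closed bounded interval, and $L(\Sigma^\omega) = \bigcup_{\C\in\Z}[m_\C,M_\C]$.

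Second I would relate the value of a word to the scc in which its run stabilizes. The run $r$ of a word $w$ in the deterministic automaton $A_E$ visits only finitely many states infinitely often, and any two such states lie in a common scc; call it $\C(w)$. From some index on, $r$ stays within $\C(w)$, and since deleting a finite prefix changes neither $\LimInfAvg$ nor $\LimSupAvg$ of any coordinate, the tuple $\tuple{L_{A_1}(w),\dots,L_{A_n}(w)}$ is already determined by that suffix and therefore lies in $F_{\min}(\conv(S_{\C(w)}))$ (this is exactly the per-scc computation underlying Theorem~\ref{theo:value-set-general}). Applying $\pi$ yields $L(w) \in [m_{\C(w)}, M_{\C(w)}]$.

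Third I would use isolation of $\eta$ to classify scc's and build the automaton. Since $\eta$ is isolated, no word has value within $\epsilon$ of $\eta$, so $[m_\C,M_\C]$ lies entirely in $(-\infty,\eta)$ or entirely in $(\eta,+\infty)$ for every $\C$; call $\C$ \emph{good} in the second case. Let $\mathcal{B}$ be $A_E$ with accepting set equal to all states lying in a good scc. A run of $w$ is accepting iff the states it visits infinitely often---which all lie in $\C(w)$---are accepting, i.e. iff $\C(w)$ is good; and by the second step $\C(w)$ is good iff $L(w) \geq \eta$. Hence $\mathcal{B}$ accepts exactly $L^{\geq\eta}$, so $L^{\geq\eta}$ is $\omega$-regular. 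The main obstacle is the first step: showing that each scc contributes a single \emph{closed} interval, which requires the compactness of $F_{\min}(\conv(S_\C))$ together with the fact that the syntactic projections preserve both compactness and connectedness, and the bookkeeping that the value of a word is governed solely by the scc in which its run eventually lives---so that acceptance can be made scc-based.
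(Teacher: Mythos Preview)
Your proposal is correct and follows essentially the same route as the paper: decompose the value set scc-by-scc, observe each scc contributes a closed interval that an isolated $\eta$ cannot meet, and then make the B\"uchi accepting set the union of the ``good'' scc's. Your write-up is in fact more careful than the paper's sketch---in particular, your compactness argument for $F_{\min}(\conv(S_\C))$ via Lemma~\ref{lemm1} and the continuous image of $\conv(S_\C)^n$ makes explicit why the per-scc contribution is a \emph{closed} interval, which the paper simply asserts.
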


\vspace{-1em}
\section{Conclusion and Future Works}

We have presented a new class of quantitative languages, the \emph{mean-payoff automaton expressions}
which are both robust and decidable (see Table~\ref{tab:properties}), and for which 
the distance between quantitative languages can be computed.
The decidability results come with a high worst-case complexity, and it is
a natural question for future works to either improve the algorithmic solution,
or present a matching lower bound. 
Another question of interest is to find a robust and decidable class of 
quantitative languages based on the discounted sum measure~\cite{CDH08}.

\bibliography{biblio}
\bibliographystyle{plain}

\newpage
\appendix

\section{Proofs of Section~\ref{sec:vector-set}}

\begin{proof}[of Lemma~\ref{lem:convex-hull-lasso-words}]
Let $A_1, \dots, A_n$ be the deterministic weighted automata occurring in $E$.

First, let $x \in \conv(S_E)$. Then, $x = \sum_{i=1}^p \lambda_i v_i$ where $v_1,v_2,\ldots,v_p$ are the
vector values of simple cycles $\rho_1, \rho_2,\ldots, \rho_p$ in $A_E = A_1 \times \dots \times A_n$, 
and $\sum_{i=1}^p \lambda_i = 1$ with $\lambda_i \geq 0$ for all $1 \leq i \leq p$.

For each of the above cycles $\rho_i$, let $q_i$ be a state occurring in $\rho_i$, and let $\rho_{i \to j}$
be a simple path in $A_E$ connecting $q_i$ and $q_j$ (such paths exist for each $1 \leq i,j \leq p$
because $A_E$ has a unique strongly connected component). Let $\rho_{0 \to i}$ be a simple path in $A_E$
from the initial state $q_I$ to $q_i$. Note that the length of $\rho_i$ and $\rho_{i \to j}$ is at most 
$m = \abs{A_E}$ the number of states in $A_E$.
We consider the following sequence of ultimately periodic paths, 
parameterized by $N \in \nat$:
$$ \hat{\rho}_N = \rho_{0 \to 1} \cdot (\rho_1^{k_1^N} \cdot \rho_{1 \to 2} \cdot \ldots \cdot \rho_p^{k_p^N} \cdot \rho_{p \to 1})^{\omega},$$
where $k_i^N = \left\lfloor \frac{N \cdot \lambda_i}{\abs{\rho_i}}\right\rfloor$ for all  $1 \leq i \leq p$.
Note that $\hat{\rho}_N$ is the run of a lasso-word $w_N$ in $A_E$, and that 
$N \cdot \lambda_i -\abs{\rho_i}  \leq \abs{\rho_i} \cdot k_i^N \leq N \cdot \lambda_i$. 

Because $\hat{\rho}_N$ is ultimately periodic, the vector value of $\hat{\rho}_N$ gives the value of $w_N$ in each $A_i$. It can 
be computed as 
$$\LimAvg(\hat{\rho}_N) = \Avg(\rho_1^{k_1^N} \cdot \rho_{1 \to 2} \cdot \ldots \cdot \rho_p^{k_p^N} \cdot \rho_{p \to 1})$$
and it can be bounded along each coordinate $j = 1, \dots, n$ as follows  (we denote by $W$ the largest weight in $A_E$ in 
absolute value):

$$
\begin{array}{rcl}
\LimAvg_j(\hat{\rho}_N) & \leq & \frac{\displaystyle\sum_{i=1}^p   k_i^N \cdot \abs{\rho_i} \cdot \Avg_j(\rho_i) + 
                         \displaystyle\sum_{i=1}^p    \abs{A_E} \cdot W}
                     {\displaystyle\sum_{i=1}^p k_i^N \cdot \abs{\rho_i} + \displaystyle\sum_{i=1}^p \abs{A_E}} \\[+6pt]
             & \leq &  \frac{\displaystyle\sum_{i=1}^p   N \cdot \lambda_i \cdot \Avg_j(\rho_i) + 
                         p \cdot m \cdot W}
                     {\displaystyle\sum_{i=1}^p N \cdot \lambda_i - \abs{\rho_i} + \abs{A_E}} \\[+3pt]
            & \leq &  \frac{N \cdot x_j + p \cdot m \cdot W  {\large \strut} }
                     {N} =  x_j + \frac{p \cdot m \cdot W  {\large \strut} }
                     {N} \\[+3pt]
\end{array}
$$

Analogously, we have 
$$
\begin{array}{rcl}
\LimAvg_j(\hat{\rho}_N) & \geq & \frac{\displaystyle\sum_{i=1}^p   k_i^N \cdot \abs{\rho_i} \cdot \Avg_j(\rho_i) - 
                         \displaystyle\sum_{i=1}^p    \abs{A_E} \cdot W}
                     {\displaystyle\sum_{i=1}^p k_i^N \cdot \abs{\rho_i} + \displaystyle\sum_{i=1}^p \abs{A_E}} \\[+6pt]
             & \geq &  \frac{\displaystyle\sum_{i=1}^p   (N \cdot \lambda_i -  \abs{\rho_i})  \cdot \Avg_j(\rho_i) - 
                         p \cdot m \cdot W}
                     {\displaystyle\sum_{i=1}^p N \cdot \lambda_i + \abs{A_E} } \\[+3pt]
            & \geq &  \frac{N \cdot x_j - 2 p \cdot m \cdot W  {\large \strut} }
                     {N + p \cdot m} =  x_j - \frac{p \cdot m \cdot (2W - x_j)  {\large \strut} }
                     {N + p \cdot m} \\[+3pt]
\end{array}
$$

Therefore $\LimAvg_j(\hat{\rho}_N) \to x_j$ when $N \to \infty$. This shows that $x$ is in the closure
of the vector set of lasso-words.

Second, we show that the value of lasso words according to each automaton $A_i$ form a vector
which belong to $\conv(S_E)$ (which is equal to its closure). Let $w=w_1(w_2)^{\omega}$ be a 
lasso-word. 
It is easy to see that there exists $p_1,p_2$ such that $p = p_1 + p_2 \leq m = \abs{A_E}$ and the run of $A_E$ on $w_1 w_2^p$
has the shape of a lasso (i.e., the automaton $A_E$ is in the same state after reading $w_1 w_2^{p_1}$ and after
reading $w_1 w_2^p$), and thus the cyclic part of the lasso can be decomposed into simple cycles in $A_E$.
The vector value of $w$ in each $A_i$ is the mean of the vector values of the simple cycles in the decomposition,
and therefore it belongs to the convex hull $\conv(S_E)$.
\qed
\end{proof}

\begin{proof}[of Lemma~\ref{lem:value-set}]
First, show that $V_E \subseteq F_{\min}(\conv(S_E))$.
Let $x \in V_E$ be a tuple of values of some word $w$ according to each automaton $A_i$ occurring in $E$
(i.e., $x_i = L_{A_i}(w)$ for all $1 \leq i \leq n$). For $\epsilon > 0$ and $1 \leq k \leq n$, 
we construct a lasso-word $w^k_{\epsilon}$ such that $\abs{L_{A_k}(w_{\epsilon}) - x_k} \leq \epsilon$ and 
$L_{A_i}(w_{\epsilon}) \geq x_i - \epsilon$ for all $1 \leq i \leq n$ with $i \neq k$. If we denote
by $y^k_{\epsilon}$ the vector value of $w^k_{\epsilon}$, then the value $y = f_{\min}(\{y^k_{\epsilon} \mid 1 \leq k \leq n \})$
is such that $\abs{y_i - x_i} \leq \epsilon$ for all $1 \leq k \leq n$. By Lemma~\ref{lem:convex-hull-lasso-words},
the limit of the vector value $y^k_{\epsilon}$ when $\epsilon \to 0$ is in $\conv(S_E)$, and thus 
$x \in F_{\min}(\conv(S_E))$. 

We give the construction of $w^k_{\epsilon}$ for $k=1$. The construction is similar for $k \geq 2$. 
Consider the word $w$ and let $\rho$ be the suffix of the (unique) run of $A_E$ on $w$ which visits
only states in the strongly connected component of $A_E$. The value of $\rho$ and the value of $w$ 
coincide (according to each $A_i$) since the mean-payoff value is prefix-independent.
Since $L_{A_i}(w) = x_i$ for all $1 \leq i \leq n$,
there exists a position $p \in \nat$ such that the mean value of all prefixes of $\rho$ of length greater 
than $p$ is at least $x_i - \epsilon$ according to each $A_i$ (since each $A_i$ is a $\LimInfAvg$-automata).
Since $L_{A_1}(w) = x_1$, there exist infinitely many prefixes $\rho'$ of $\rho$ with mean value according to $A_1$ 
close to $x_1$, more precisely such that $\abs{\Avg_1(\rho') - x_1} \leq \epsilon$. Pick such a prefix $\rho'$ of length
at least $\max(p, \frac{1}{\epsilon})$. Since $\rho'$ is in the strongly connected component of $A_E$, 
we can extend $\rho'$ to loop back to its first state. This requires at most $m$ additional
steps and gives $\rho''$. Note also that $\rho''$ can be reached from the initial state of $A_E$ since it was the case of $\rho$,
and thus it defines a lasso-shaped run whose value can be bounded along the first coordinate as follows:
$$
\begin{array}{rcl}
\abs{\Avg_1(\rho'') - x_1} & \leq & \frac{ {\Large \strut} \bigabs{ \abs{\rho'} \cdot \Avg_1(\rho') - \abs{\rho''} \cdot x_1} + m \cdot W}{\abs{\rho''}}  \\[+3pt]
                           & \leq & \frac{ {\Large \strut} \abs{\rho'} \cdot \abs{\Avg_1(\rho') - x_1} + (\abs{\rho''} - \abs{\rho'}) \cdot x_1 + m \cdot W}{\abs{\rho''}}   \\[+3pt]
                           & \leq & \epsilon + \frac{ {\Large \strut} m \cdot x_1 + m \cdot W}{\abs{\rho''}} \leq \epsilon + \frac{ {\Large \strut} 2m \cdot W}{\abs{\rho''}}  \\[+5pt]
                           & \leq & \epsilon \cdot (1 + 2m \cdot W) \\[+1pt]
\end{array}
$$
Hence, the value along the first coordinate of the word $w^1_{\epsilon}$ corresponding to the run $\rho''$ tends to $x_1$ when when $\epsilon \to 0$. 
We show similarly that the value of $w^1_{\epsilon}$ along the other coordinates $i \geq 2$ is bounded from below by $x_i -  \epsilon \cdot (1 + 2m \cdot W)$.
The result follows. \medskip


Now, we show that $F_{\min}(\conv(S_E)) \subseteq V_E$.
In this proof, we use the notation $\odot$ for \emph{iterated concatenation} defined as follows. 
Given nonempty words $w_1,w_2 \in \Sigma^{+}$, the finite word $w_1 \odot w_2$ is $w_1 \cdot (w_2)^k$
where $k = \abs{w_1}^2$. We assume that $\odot$ (iterated concatenation) and $\cdot$ (usual concatenation) have the same 
precedence and that they are left-associative. For example, the expression $ab \odot a \cdot b$
is parsed as $(ab \odot a) \cdot b$ and denotes the word $abaaaab$, while the expression 
$ab \cdot a \odot b$ is parsed as $(ab \cdot a) \odot b$ and denotes the word $abab^9$.
We use this notation for the purpose of simplifying the proof presentation, and some care needs to be taken.
For example, explicit use of concatenation (i.e., $a \cdot b$ vs. $ab$) makes a difference
since $ab \odot ab = (ab)^5$ while $ab \odot a \cdot b = aba^4b$. Finally, we use notations
such as $(w_1 \cdot w_2 \,\odot)^{\omega}$ to denote the infinite word $w_1 \cdot w_2 \odot w_1 \cdot w_2 \odot \dots$.

Usually we use the notation $w_1 \odot w_2$ when the run of $A_E$ on $w_1 \cdot w_2$ can be decomposed as
$\rho_1 \cdot \rho_2$ where $\rho_i$ corresponds to $w_i$ ($i=1,2$) and $\rho_2$ is a cycle
in the automaton. Then, the mean value of the run on $w_1 \odot w_2$ is
\begin{xalignat*}{1}    
 & \frac{ \abs{\rho_1} \cdot \Avg(\rho_1) + \abs{\rho_1}^2 \cdot \abs{\rho_2} \cdot \Avg(\rho_2)}{ \abs{\rho_1}  + \abs{\rho_1}^2 \cdot \abs{\rho_2}}  \\[+6pt]
 = \ & \frac{ \Avg(\rho_1) + \abs{\rho_1} \cdot \abs{\rho_2} \cdot \Avg(\rho_2)}{ 1  + \abs{\rho_1} \cdot \abs{\rho_2}}  \\[+6pt]
 = \ & \Avg(\rho_2) + \frac{ \Avg(\rho_1) - \Avg(\rho_2)}{ 1  + \abs{\rho_1} \cdot \abs{\rho_2}} 
\end{xalignat*}
Therefore, since $\abs{ \Avg(\rho_1) - \Avg(\rho_2)} \leq 2W$ independently of $w_1$ and $w_2$, 
a key property of $\odot$ is that the mean value of $w_1 \odot w_2$ can be made 
arbitrarily close to $\Avg(\rho_2)$ by taking $w_1$ sufficiently long (since $\abs{w_1} = \abs{\rho_1}$).

We proceed with the proof of the lemma. 
Let $x \in F_{\min}(\conv(S_E))$ and let $y_1, \dots, y_n$ be $n$ points in $\conv(S_E)$ such that
the $i^{\text{th}}$ coordinate of $x$ and $y_i$ coincide for all $1 \leq i \leq n$, and 
the $j^{\text{th}}$ coordinate of $x$ is smaller than the $j^{\text{th}}$ coordinate of $y_i$ for all $j \neq i$.
Such $y_i$'s exist by definition of $F_{\min}$ though they may not be distinct. 

By Lemma~\ref{lem:convex-hull-lasso-words}, for all $\epsilon > 0$ there exist lasso-words 
$w_1, \dots, w_n$ such that $\norm{v_k - y_k} \leq \epsilon$ where $v_k = \tuple{L_{A_1}(w_k),\ldots,L_{A_n}(w_k)}$ 
for each $1 \leq k \leq n$. For each $1 \leq i \leq n$, let $\rho_i$ be the cyclic part of the (lasso-shaped) run 
of $A_E$ on $w_i$, and let $q_i$ be the first state in $\rho_i$. For each $1 \leq i,j \leq n$, define $\rho_{i \to j}$ the
shortest path in $A_E$ from $q_i$ to $q_j$, and let $\rho_{0 \to j}$ be a simple path in $A_E$ from the initial 
state $q_I$ to $q_j$ (such paths exist because $A_E$ is strongly connected). 
Note that $\Avg_j(\rho_i) = L_{A_j}(w_i)$.
We construct the following infinite run in $A_E$:
$$\hat{\rho} =  \rho_{0 \to 1} \odot (\rho_1 \cdot \rho_{1 \to 2} \odot \rho_2 \cdot \rho_{2 \to 3} \odot \dots \rho_n \cdot \rho_{n \to 1} \odot)^{\omega} $$
It is routine to show that $\hat{\rho}$ is a run of $A_E$, and 
we have $\LimAvg_j(\hat{\rho}) = v_{jj}$ because $(i)$ the cycles $\rho_1, \dots, \rho_n$ are asymptotically 
prevailing over the cycle $\rho_{1 \to 2} \rho_{2 \to 3} \dot \rho_{n \to 1}$, $(ii)$ 
by the key property of $\odot$, there exist infinitely many prefixes in  $\hat{\rho}$
such that the average of the weight along the $j^{\text{th}}$ coordinate converges to $v_{jj}$, 
and $(iii)$ all cycles $\rho_i$ have average value greater than $v_{jj}$ along the $j^{\text{th}}$
coordinate. Therefore, the liminf of the averages along the $j^{\text{th}}$ coordinate (i.e., $\LimAvg_j(\hat{\rho})$) 
is $v_{jj}$, and the vector of values of $\hat{\rho}$ is thus at distance $\epsilon$ of $x$, that is
$\norm{\LimAvg(\hat{\rho}) - x} \leq \epsilon$. The construction of $\hat{\rho}$ can be adapted to
obtain $\LimAvg(\hat{\rho}) = x$ by changing the $k^{\text{th}}$ occurrence of $\rho_i$ in $\hat{\rho}$ by a 
cycle corresponding to a lasso-word $w_i$ obtained as above for $\epsilon <  \frac{1}{n}$.
\qed
\end{proof}

\section{Proofs of Section~\ref{sec:alg-cons}}

\begin{proof}[of Lemma~\ref{lem:value-set-convex}]
Let $x = f_{\min}(u^1, u^2, \dots, u^n)$ and $y = f_{\min}(v^1, v^2, \dots, v^n)$
where $u^1, \dots, u^n, v^1, \dots, v^n \in X$. Let $z = \lambda x + (1-\lambda) y$ where $0 \leq \lambda \leq 1$ 
and we prove that $z \in F_{\min}(X)$.
Without loss of generality, assume that $x_i = u^i_i$ and $y_i = v^i_i$ for all $1 \leq i \leq n$. 
Then $z_i =  \lambda u^i_i + (1-\lambda) v^i_i$ for all $1 \leq i \leq n$.

To show that $z \in F_{\min}(X)$, we give for each $1 \leq j \leq n$ a point $p \in X$ such that 
$p_j = z_j$ and $p_k \geq z_k$ for all $k \neq j$. Take $p = \lambda u^j + (1-\lambda) v^j$. 
Clearly $p \in X$ since $u^j, v^j \in X$ and $X$ is convex, and $(i)$ $w_j = \lambda u^j_j + (1-\lambda) v^j_j = z_j$,
and $(ii)$ for all $k \neq j$, we have $w_k =  \lambda u^j_k + (1-\lambda) v^j_k \geq \lambda u^k_k + (1-\lambda) v^k_k = z_k$
(since $u^k$ has the minimal value on $k^{\text{th}}$ coordinate among $u^1, \dots, u^n$, similarly for $v^k$). 
\qed
\end{proof}

\begin{proof}[of Proposition~\ref{prop:conv-min-2D}]
By Lemma~\ref{lem:value-set-convex}, we already know that $\conv(F_{\min}(S)) \subseteq F_{\min}(\conv(S))$
(the set $F_{\min}(\conv(S))$ is convex, and since 
$F_{\min}$ is a monotone operator and $S \subseteq \conv(S)$, we have  $F_{\min}(S) \subseteq F_{\min}(\conv(S))$
and thus $\conv(F_{\min}(S)) \subseteq F_{\min}(\conv(S))$).

We prove that $F_{\min}(\conv(S)) \subseteq \conv(F_{\min}(S))$ if $S \subseteq \real^2$. 
Let $x \in F_{\min}(\conv(S))$ and show that $x \in \conv(F_{\min}(S))$. 
Since $x \in F_{\min}(\conv(S))$, there exist $p,q \in \conv(S)$ such that 
$x = f_{\min}(p,q)$, and assume that $p_1 < q_1$ and $p_2 > q_2$ (other cases are 
symmetrical, or imply that $x=p$ or $x=q$ for which the result is trivial as then $x \in \conv(S)$).
We show that $x = (p_1,q_2)$ is in the convex hull of $\{p,q,r\}$ where $r = f_{\min}(u,v)$ 
and $u \in S$ is the point in $S$ with smallest first coordinate, and $v \in S$ is the point 
in $S$ with smallest second coordinate, so that $r_1 = u_1 \leq p_1$ and $r_2 = v_2 \leq q_2$.
Simple computations show that the equation $x = \lambda p + \mu q + (1-\lambda-\mu) r$ has
a solution with $0 \leq \lambda, \mu  \leq 1$ and the result follows.
\qed
\end{proof}

\begin{proof}[of Lemma~\ref{lemm1}]
By definition, we have $F_n(S) \subseteq F(S)$.
For a point $x=f(P)$ for a finite subset $P \subseteq S$, choose one point each that 
contributes to a coordinate and obtain a finite set $P' \subseteq P$ of at most $n$
points such that $x=f(P)$. This shows that $F(S) \subseteq F_n(S)$.

For the second part, let $P=\set{p_1, p_2,\ldots,p_k}$ with $k \leq n$, and let $x=f(P)$.
Let $x_1=f(p_1,p_2)$, and for $i > 1$ we define $x_i=f(x_{i-1},p_{i+1})$.
We have $x=x_{n-1}$ (e.g., $f(p_1,p_2,p_3)= f (f(p_1,p_2), p_3)$).
Thus we have obtained $x$ by applying $f$ on two points for $n-1$ times, and
it follows that $F_n(S) \subseteq F_2^{n-1}(S)$.
\qed
\end{proof}

\begin{proof}[of Theorem~\ref{theo:explicit-construction}]
We show that the construction $\gamma$ satisfies condition {\bf C1} and {\bf C2}.
Let $Y' = \gamma(Y)$. 
Clearly the set $Y'$ is a finite subset of $\convk(Y)$ and thus Condition {\bf C1} 
holds and we now show that Condition {\bf C2} is satisfied.

%
%
Since $F_2(\conv(Y))$ is convex (by Lemma~\ref{lem:value-set-convex}), it suffices
to show that all corners of $F_2(\conv(Y))$ belong to $\convk(F(Y'))$.
Consider a point $x=f(p,q)$ where $p,q \in \convk(Y)$. 
We will show that either $p,q \in Y'$ or $x$ 
cannot be a corner of $\convk(F_2(Y))$. It will follow that 
$F_2(\conv(Y)) \subseteq \convk(F(Y'))$. 
%
%
Our proof will be an induction on the number of coordinates such that there is a 
\emph{tie} (tie is the case where the value of a coordinate of $p$ and $q$ coincide).
If there are $n$ ties, then the points $p$ and $q$ are equal and we have $x=p=q$, 
and this case is trivial since $Y \subseteq Y'$. So the base case is done. 
By inductive hypothesis, we assume that $k+1$-ties yield the result and we 
consider the case for $k$-ties.
Without loss of generality we consider the following case:
\[
p_1=q_1; p_2=q_2; \cdots; p_k=q_k;
\]
\[
p_{k+1} < q_{k+1}; p_{k+2} < q_{k+2}; \cdots; p_{\ell}< q_{\ell};
\]
\[
p_{\ell+1} > q_{\ell+1}; p_{\ell+2} > q_{\ell+2}; \cdots; p_{n}> q_{n};
\]
i.e, the first $k$ coordinates are ties, then $p$ is the sole contributor 
to the coordinates $k+1$ to $\ell$, and for the rest of the coordinates $q$ is 
the sole contributor.
Below we will use the expression \emph{infinitesimal change} to mean change smaller than 
$\eta =\min_{k <i \leq n} \abs{p_i -q_i}$ (note $\eta>0$).
Consider the plane $\Pi$ with first $k$ coordinates constant (given by 
$x_1=p_1=q_1; x_2=p_2=q_2; \cdots; x_k=p_k=q_k$).
We intersect the plane~$\Pi$ with $\convk(Y)$ and we obtain a polytope. 
First we consider the case when $p$ and $q$ are not a corner of the polytope 
and then we consider when $p$ and $q$ are corners of the polytope.

\begin{enumerate}
\item \emph{Case 1: $p$ is not a corner of the polytope $\Pi \cap \convk(Y)$.}
We draw a line in $\Pi$ with $p$ as midpoint such that the line is contained 
in $\Pi \cap \convk(Y)$. This ensures that the coordinates $1$ to $k$ remain 
fixed along the line.
\begin{enumerate}
\item If any one of coordinates from $k+1$ to $\ell$ changes along the line, then by 
infinitesimal change of $p$ along the line, we ensure that $x$ moves along a line.

\item Otherwise coordinates $k+1$ to $\ell$ remain constant; and we move~$p$
along the line in a direction such that at least one of the remaining coordinates (say $j$) 
decreases, and decreasing $j$ we have one of the following three cases:
\begin{enumerate}
\item we go down to $q_j$ and then we have one more tie and we are fine by 
inductive hypothesis;

\item we hit a face of the polytope $\Pi \cap \convk(Y)$ and then we change direction of the line (while staying in the hit face) and continue;

\item we hit a corner of the polytope $\Pi \cap \convk(Y)$ and then $p$ becomes a corner
which will be handled in Case 3.
\end{enumerate}

\end{enumerate} 

\item \emph{Case 2: $q$ is not a corner of the polytope $\Pi \cap \convk(Y)$.}
By symmetric analysis to Case~1 either we are done or $q$ becomes a corner 
of the polytope $\Pi \cap \convk(Y)$. 

\item \emph{Case 3: $p$ and $q$ are corners of the polytope $\Pi \cap \convk(Y)$.}
If $\Pi$ is supported by $Y$, then both $p,q \in Y'$ and we are done.
Otherwise $\Pi$ is not supported by $Y$, and now we move along lines with 
$p$ and $q$ as midpoints and slide the plane $\Pi$. 
In other words we move $p$ and $q$ alone lines and move such that the ties 
remain the same.
We also ensure infinitesimal changes along the line so that the contributor 
of each coordinate is the same as original.
Let 
\[
p(\lambda) =p + \lambda \cdot \vec{v};
\quad
q(\mu) =q  + \mu \cdot \vec{w};
\]
be the lines where $\vec{v}$ and $\vec{w}$ are directions.
By ties for $1 \leq i \leq k$ we have $\lambda \cdot v_i =\mu \cdot w_i$.
Then for infinitesimal change the point $x$ moves as follows: 
\[
\begin{array}{l}
x(\lambda,\mu)  =    f(p(\lambda),q(\mu))) \\[1ex]
=  
(p_1 + \lambda \cdot v_1, 
p_2 + \lambda \cdot v_2, \cdots
p_\ell + \lambda \cdot v_\ell, 
q_{\ell +1} + \mu \cdot w_{\ell+1}, \cdots, 
q_{n} + \mu \cdot w_{n}) 
\\[1ex]
= (p_1 + \lambda \cdot v_1, 
p_2 + \lambda \cdot v_2, \cdots
p_\ell + \lambda \cdot v_\ell, 
q_{\ell +1} + \lambda \cdot\frac{v_1}{w_1} \cdot w_{\ell+1}, \cdots, 
q_{n} + \lambda \cdot\frac{v_1}{w_1} \cdot w_{n}) 
\end{array}
\]
It follows that $x$ moves along the line 
$x + \lambda \cdot \vec{z}$ where for 
$1 \leq i \leq \ell$ we have $z_i=v_i$ and for $\ell < i \leq n$ we have 
$z_i= \frac{v_1}{w_1}\cdot w_i$; note that $w_1 >0$ since the plane slides.
Since $x$ moves along a line it cannot be an extreme point.
\end{enumerate}
This completes the proof. Also note that in the special case when there is 
no tie at all then we do not need to consider Case 3 as then $\Pi = \real^n$ 
and thus $p$ and $q$ are corners of $\convk(Y)$ and hence in $Y'$.

\smallskip\noindent{\bf Analysis.} Given a set of $m$ points, 
the construction $\constr$ yield at most $m^2 \cdot 2^n$ points. 
The argument is as follows: consider a point $p$, and then we consider 
all $k$-dimensional coordinates planes through $p$. 
There are ${n \choose k}$ possible $k$-dimensional coordinate plane through $p$,
and summing over all $k$ we get that there are at most $2^n$ coordinate planes 
that we consider through $p$.
The interesection of a coordinate plane through $p$ with the convex hull of 
$m$ points gives at most $m$ new corner points, and this claim is as proved 
follows: the new corner points can be constructed as the shadow of the convex
hull on the plane, and since the convex hull has $m$ corner points the claim
follows.
Thus it follows that the construction yield at most $m^2 \cdot 2^n$ new points,
and thus we have at most $m+ m^2 \cdot 2^n \leq 2 \cdot m^2 \cdot 2^n$ points. 
If the set $S$ has $m$ points, applying the construction iteratively for $n$ 
times we obtain the desired set $S'$ that has at most 
$m^{2^n} \cdot 2^{n^2+n}$ points.
Since convex hull of a set of $\ell$ points in $n$ dimension can be 
constructed in $\ell^{O(n)}$ time, it follows that the set $S'$ can 
be constructed in $m^{O(n\cdot 2^n)} \cdot 2^{O(n^3)}$ time. 
\qed
\end{proof}

\begin{proof}[Theorem~\ref{thrm_undecidable} (Sketch)] 
We will show the undecidability for the quantitative universality problem 
for nondeterministic mean-payoff automata. 
It will follow that the quantitative language inclusion and quantitative 
language equivalence problem are undecidable for both nondeterministic and 
alternating automata. 
The quantitative universality for nondeterministic automata can be reduced to 
the quantitative emptiness as well as the quantitative universality problem 
for alternating mean-payoff automata. Hence to complete the proof we derive 
the undecidability of quantitative universality for nondeterministic 
mean-payoff automata from the recent results of~\cite{DDGRT10}.

The results of~\cite{DDGRT10} show that in two-player \emph{blind} 
imperfect-information mean-payoff games whether there is a player~1 
blind-strategy $\sigma$ such that against all player~2 strategies $\tau$ the 
mean-payoff value $P(\sigma,\tau)$ of the play given $\sigma$ and $\tau$ is 
greater than $\nu$ is undecidable.
The result is a reduction from the halting problem of two-counter machines, 
and we observe that the reduction has the following property: 
for threshold value $\nu=0$, if the two-counter machine halts then 
player~1 has a blind-strategy to ensure payoff greater than $\nu$, and otherwise
against every blind-strategy for player~1, player~2 can ensure that the payoff for 
player~1 is at most $\nu=0$. 
Thus from the above observation about the reduction of~\cite{DDGRT10} it follows 
that in two-player blind imperfect-information mean-payoff games, given a 
threshold $\nu$, the decision problem whether 
\[
\exists \sigma.\ \inf_{\tau} P(\sigma,\tau) > \nu
\]
where $\sigma$ ranges over player~1 blind-strategies, and $\tau$ over player~2 
strategies, is undecidable and dually the following decision problem whether
\[
\forall \sigma.\ \sup_{\tau} P(\sigma,\tau) \geq \nu
\]
is also undecidable.
The universality problem for nondeterministic mean-payoff automata is equivalent
to two-player blind imperfect information mean-payoff games where the choice of 
words represents the blind-strategies for player~1 and resolving nondeterminism 
corresponds to strategies of player~2. 
It follows that for nondeterministic mean-payoff automata $A$, 
given a threshold $\nu$, the decision problem whether 
\[
\text{for all words } w. \ L_A(w) \geq \nu
\]
is undecidable.
\qed
\end{proof}

\section{Proofs of Section~\ref{sec:expressive-power}}

\begin{proof}[of Theorem~\ref{thrm_expressive_power}]
We prove the two assertions.
\begin{enumerate}
\item The results of~\cite{CDH-FCT09} shows that there exists deterministic mean-payoff 
automata $A_1$ and $A_2$ such that $\Sum(A_1,A_2)$ cannot be expressed by alternating mean-payoff 
automata.
Hence the result follows.

\item We now show that there exist quantitative languages expressible by 
nondeterministic mean-payoff automata that cannot be expressed by mean-payoff 
automaton expressions.
Consider the language $L_F$ of finitely many $a$'s, i.e., for an infinite word $w$ 
we have $L_F(w)=1$ if $w$ contains finitely many $a$'s, and $L_F(w)=0$ otherwise. 
It is easy to see that the nondeterministic mean-payoff automaton (shown in 
\figurename~\ref{figure:aut3}) defines $L_F$. 
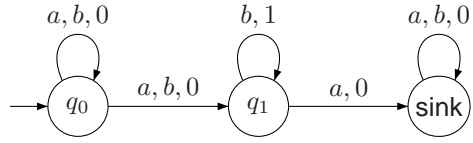
\begin{figure}[!t]
 \begin{center}
 \unitlength=.8mm
\def\fsize{\normalsize}

\begin{picture}(78,30)(0,0)

{\fsize

\node[Nmarks=i](x0)(12,10){$q_0$}
\node[Nmarks=n](x1)(42,10){$q_1$}
\node[Nmarks=n, Nadjust=w](x2)(72,10){{\sf sink}}

\drawloop[ELside=l, ELdist=1, loopCW=y, loopdiam=7, loopangle=90](x0){$a,b,0$}
\drawedge[ELpos=50, ELside=l, ELdist=1, curvedepth=0](x0,x1){$a,b,0$}
\drawloop[ELside=l, ELdist=1, loopCW=y, loopdiam=7, loopangle=90](x1){$b,1$}
\drawedge[ELpos=50, ELside=l, ELdist=1, curvedepth=0](x1,x2){$a,0$}
\drawloop[ELside=l, ELdist=1, loopCW=y, loopdiam=7, loopangle=90](x2){$a,b,0$}



}
\end{picture}
 \end{center}
 \caption{A nondeterministic limit-average automaton. \label{figure:aut3}}
\end{figure}

We now show that $L_F$ is not expressible by a mean-payoff automaton expression.
Towards contradiction, assume that the expression $E$ defines the language $L_F$,
and let $A_E$ be the synchronized product of the deterministic automata occurring
in $E$ (assume $A_E$ has $n$ states). 
Consider a reachable bottom strongly connected component $V$ of the underlying graph
of $A_E$, and let $C$ be a $b$-cycle in $V$. We construct an infinite word $w$ 
with infinitely many $a$'s as follows: $(i)$ start with a prefix $w_1$ of length at most $n$ to reach $C$, $(ii)$
loop $k$ times through the b-cycle $C$ (initially $k=1$), $(iii)$ read an `$a$' and then a finite word
of length at most $n$ to reach $C$ again (this is possible since $C$ is in a bottom s.c.c.), 
and proceed to step $(ii)$ with increased value of $k$.

The cycle $C$ corresponds to a cycle in each automaton of $E$, and
since the value of $k$ is increasing unboundedly, the value of $w$ in
each automaton of $E$ is given by the average of the weights along
their $b$-cycle after reading $w_1$. Therefore, the value of $w$ and the
value of $w_1 b^\omega$ coincide in each deterministic automaton of $E$.
As a consequence, their value coincide in $E$ itself. This is a contradiction
since $L_F(w)=0$ while  $L_F(w_1 b^\omega) = 1$.
\end{enumerate}
\qed
\end{proof}

\end{document}